\newcommand{\T}{\mathscr{T}}
\newcommand{\PP}{\mathscr{P}}
\newcommand{\OO}{\mathcal{O}}
\newcommand{\E}{\mathcal{E}}
\newcommand{\I} {\mathcal I}
\newcommand{\M}{\mathcal{M}}
\newcommand{\G}{\mathcal{G}}
\newcommand{\Ident} {\hat{\mathbf 1}}
\newcommand{\s}[1]{_{\rm #1}}
\newcommand{\set}{\mathcal{S}}
\newcommand{\vs}[1]{{\boldsymbol{#1}}}
\newcommand{\conv}{{\operatorname{conv}}}
\newtheorem{definition}{Definition}
\newtheorem{lemma}{Lemma}
\newtheorem{theorem}{Theorem}
\newtheorem{crit}{Criterion}
\newtheorem{corol}{Corollary}
\newcommand{\ketbra}[1]{ | #1 \rangle\!\langle #1 |}
\newcommand{\ketbrax}[2]{ | #1 \rangle\s{#2}\langle #1 |}
\newcommand{\bra}[1] { \langle #1 | }
\newcommand{\ket}[1] { | #1 \rangle }
\begin{document}
\title{Quantum correlations and global coherence in distributed quantum computing}
 \author{Farid Shahandeh}
 \email{Electronic address: shahandeh.f@gmail.com}
  \affiliation{Centre for Quantum Computation and Communication Technology, School of Mathematics and Physics, University of Queensland, St Lucia, Queensland 4072, Australia}
   \affiliation{Department of Physics, Swansea University, Singleton Park, Swansea SA2 8PP, United Kingdom}
   \author{Austin P. Lund}
  \author{Timothy C. Ralph}
 \affiliation{Centre for Quantum Computation and Communication Technology, School of Mathematics and Physics, University of Queensland, St Lucia, Queensland 4072, Australia}

\begin{abstract}
Deviations from classical physics when distant quantum systems become correlated are interesting both fundamentally and operationally.
There exist situations where the correlations enable collaborative tasks that are impossible within the classical formalism.
Here, we consider the efficiency of quantum computation protocols compared to classical ones as a benchmark for separating quantum and classical resources and argue that the computational advantage of collaborative quantum protocols in the discrete variable domain implies the nonclassicality of correlations.
By analysing a toy model, it turns out that this argument implies the existence of quantum correlations distinct from entanglement and discord. 
We characterize such quantum correlations in terms of the net global coherence resources inherent within quantum states and show that entanglement and discord can be understood as special cases of our general framework.
Finally, we provide an operational interpretation of such correlations as those allowing two distant parties to increase their respective local quantum computational resources only using locally incoherent operations and classical communication.
\end{abstract}


\maketitle


The complete characterization of correlations between constituent elements of quantum systems is important both fundamentally and operationally.
Two well-known examples of such attempts are quantum entanglement and discord.
Entanglement is a resource for many nonlocal tasks~\cite{Ekert1991,Bennett1993} that cannot be created between spatially separated subsystems using local operations and classical communication (LOCC)~\cite{Werner1989}.
However, in many other tasks entanglement is known to play no or very minor role~\cite{DQC1,Braunstein1999,Lanyon2008}, putting forward quantum discord~\cite{Henderson2001,Ollivier2001,Modi2012,Misra2015} as a necessary resource~\cite{Horodecki2005,Datta2008,Madhok2013}, although there are ongoing controversies~\cite{Modi2012,Datta2012,Ferraro2012}.
The latter arises from the discrepancy between the entropic measures of correlations in classical and quantum physics
showing that not all the information encoded via LOCC into spatially separated systems can be extracted using the same type of operations~\cite{Bennett1999,Horodecki2005}.

Here we offer a new viewpoint on the quantumness of correlations.
Our ultimate objective is three-fold:
first, to understand the fundamental border (if any) between classical and quantum correlations in light of the nonclassical power of quantum computers;
second, to put forward a novel unified and consistent framework for characterizing quantum correlations in both continuous and discrete variable domains~\cite{Ferraro2012,Shahandeh2017NLBS};
third, to obtain a deeper understanding of the resources that might be responsible for the nonclassical power of quantum computation models.
The present manuscript focuses on the first aim and provides a proposal for the second one, in complement to our recent investigation of the continuous variable protocols~\cite{Shahandeh2017NLBS}.
In view of our results, we also obtain a new perspective on the third goal.
To this end, we first examine nonclassicality from two viewpoints, namely, that of computational science and the resource theory of coherence~\cite{Brandao2015,Baumgratz2014,Winter2016} and provide two nonclassicality criteria based on them.
We establish a fundamental correspondence between classical computation protocols and the formalism of quantum coherence.
This close affinity benchmarks computational efficiency for quantum-classical separation and gives rise to an equivalence between our nonclassicality criteria.
We then introduce a toy model called \emph{nonlocal deterministic quantum computing with two qubits} (NDQC2) which performs a nonlocal collaborative computation exponentially faster than any classical algorithm via correlation measurements without using any entanglement or discord.
Making use of our computational efficiency benchmark, this protocol shows quantumness of correlations that are not captured by the standard classification in quantum information theory.
This is complementary to a similar conclusion for the continuous variable domain where we showed that the nonlocal \textsc{BosonSampling} protocol contains correlations that cannot be efficiently simulated on a classical computer while the input and output mixed states contain no entanglement or discord~\cite{Shahandeh2017NLBS}.
In contrast to this, for the specific case of pure state quantum computations, entanglement is known to be required for a computational advantage over classical algorithms~\cite{Vidal2003}.
Inspired by this feature of NDQC2 and the role of coherence as a primitive property of quantum systems~\cite{Jiang2013,Vogel2014,Streltsov2015,Gholipour2016,Killoran2016,Ma2016,Tan2016} in quantum computation~\cite{Ma2016,Matera2016}, we show that the nonclassical advantage of the correlations within NDQC2 can be quantitatively explained in terms of the net global coherence inherent in the input and output states to the protocol.
We thus argue that the net global quantum-coherence should be understood as the more general concept of quantum correlations.
We show the relevance of our definition by proving that the current standard hierarchy of quantum correlations defines special classes of globally-coherent states and further providing an operational interpretation for such correlations.
To be specific, quantum correlations as presented here are manifested in the ability of two distant parties to increase their local quantum computational resources by applying classical operations locally and exploiting classical communication.


\section{Preliminaries}


\subsection{Standard Correlations in Quantum Information}\label{CQI}

From the viewpoint of quantum information theory, not all the global information can be encoded within or decoded from a bipartite (or multipartite) physical system via local operations and classical communication.
This leads to the following hierarchy of quantum-correlated states~\cite{Werner1989,Oppenheim2002,Horodecki2005}:
\begin{enumerate}[(i)]
\item entangled states that cannot be written in the \textit{separable} form $\hat{\varrho}_{\rm AB}{=}\sum_i p_i \hat{\varrho}_{{\rm A};i}{\otimes}\hat{\varrho}_{{\rm B};i}$;

\item two-way quantum correlated (discordant) states which cannot be represented via a set of locally orthogonal states on either side;

\item one-way nondiscordant (or one-way quantum-classical correlated) states that can be written as $\hat{\varrho}_{\rm AB}{=}\sum_{j} p_{j} \hat{\varrho}_{{\rm A};j}{\otimes}|j\rangle_{\rm B}\langle j|$, or $\hat{\varrho}_{\rm AB}{=}\sum_{i} p_{i} |i\rangle_{\rm A}\langle i|{\otimes}\hat{\varrho}_{{\rm B};i}$ using at most one set of locally orthonormal states;

\item two-way or fully nondiscordant (or strictly classical-classical) states admitting the form $\hat{\varrho}_{\rm AB}{=}\sum_{ij} p_{ij} |i\rangle_{\rm A}\langle i|{\otimes}|j\rangle_{\rm B}\langle j|$.

\end{enumerate}
The latter simply encode the joint probability distributions $\{p_{ij}\}$ using locally orthonormal states and have previously been assumed to possess no quantum advantage in a nonlocal information processing task.
The correlations within each of the classes above are usually measured using an entropic function as the discrepancy between their total-correlation contents and the amount accessible via LOCC, most commonly called quantum discord~\cite{Henderson2001,Ollivier2001}.
For a bipartite quantum state, quantum discord is asymmetric and a quantum state has zero discord both from Alice to Bob and vice versa if and only if it is of the form (iv).
For this reason, speaking of classical correlations is assumed to be synonymous with nondiscordant states.
In addition, every entangled state is necessarily discordant.
Quantum discord is thus considered as the most general measure of quantum correlations in quantum information theory~\cite{Oppenheim2002,Horodecki2005}.
In a recent work, however, we considered the computational advantage obtained in the nonlocal \textsc{BosonSampling} quantum computation protocol, that exploits particular types of mixed states, to show that there exists quantum correlations that are not captured by this picture, namely the global P-function nonclassicality~\cite{Shahandeh2017NLBS}.


\subsection{Quantum Coherence}

The resource theory of coherence comprises 
(i) a set of pure quantum states as extreme points $\mathcal{E}{=}\{|i\rangle\}$ which generates the set of (cost) free states of the theory as its convex hull $\set\s{inc}{=}{\rm conv}\{|i\rangle\langle i|{:}|i\rangle{\in}\mathcal{E}\}$;
(ii) the set of free transformations $\mathcal{O}\s{inc}$ which leave the set of free states invariant.
The extreme points, usually termed \emph{the computational basis}, in general, does not need to satisfy any orthogonalization or completeness conditions.
This is, for example, the case for the nonclassicality theory of continuous variable bosonic systems in which the set of bosonic coherent states $\{|\alpha\rangle{:}\alpha{\in}\mathbb{C}\}$, as extreme points, are nonorthogonal and overcomplete~\cite{GlauberBook}.
It is also clear that having infinite freedom, our choice of the computational basis depends on the physical system of interest and fundamental or operational restrictions.
For instance, in photonics the computational basis can be chosen to be the vertical and horizontal, or the diagonal and antidiagonal components of the radiation field.
In contrast, in an atomic realization of qubits, the preferred computational basis could be the energy eigenstates to which the systems decohere.

We assume that the computational basis is a finite complete orthonormal basis.
In particular, we are interested in collaborative global computations consisting of two local computations.
In such scenarios, when the local computational bases for the two parties, Alice and Bob, are $\mathcal{E}_{\rm A}{=}\{|i\rangle_{\rm A}\}$ and $\mathcal{E}_{\rm B }{=}\{|j\rangle_{\rm B}\}$, respectively, the \emph{global} computational basis is given by $\mathcal{E}_{\rm AB }{=}\mathcal{E}_{\rm A}{\otimes}\mathcal{E}_{\rm B}{=}\{|i\rangle_{\rm A}\otimes|j\rangle_{\rm B}\}$ if Alice and Bob are confined to separable operations (S).

Incoherent states are invariant under various classes of free operations.
Several of such operations for the resource theory of coherence have been studied so far, e.g.,\ general, strict~\cite{Winter2016}, and genuine incoherent operations~\cite{deVicente2017}.
A review of these operations and their operational meaning can be found in Refs.~\cite{Chitambar2016-2,Streltsov2017,Streltsov2017RMP}.
The first class of interest here, is called the {\it general} incoherent operations as the most general incoherent operations possible and possess Kraus decomposition $\Lambda(\cdot)=\sum_i \hat{F}_i(\cdot)\hat{F}_i^\dag$ such that $\sum_i \hat{F}_i^\dag\hat{F}_i=\Ident$ and $\hat{F}_i\set\s{inc}\hat{F}_i^\dag \subset \set\s{inc}$ for all $i$~\cite{Winter2016}.
The latter condition ensures that even by subselection of the operation output one cannot generate coherence from incoherent states. 
Every Kraus operator then must be of the form
\begin{equation}\label{GC:eq:incohKraus}
\hat{F}=\sum_i c_{i} \ket{i}\bra{\psi_i},
\end{equation}
in which $c_{i}\in\mathbb{C}$ and $\ket{\psi_i}\in{\rm span}\{\ket{j}\in\E_i\}$ so that $\E_i$s are disjoint subsets of $\E$.

The second class of operations are called {\it strict} incoherent operations.
They are simply incoherent operations of the above given form with the extra restriction that for every Kraus operator $\hat{F}_i$ it holds true that $\hat{F}_i^\dag$ is also incoherent so that the adjoint map of $\Lambda$ given by $\Lambda^\ddagger(\cdot)=\sum_i \hat{F}_i^\dag(\cdot)\hat{F}_i$ is also incoherent~\cite{Winter2016}.
Using Eq.~\eqref{GC:eq:incohKraus} this implies that the Kraus operators of strict incoherent operations has the form
$\hat{F}=\sum_i c_{i} \ket{i}\bra{j(i)}$, with both $\ket{i},\ket{j}\in\E$ and $j(i)$ being a one-to-one function. 
As shown by Yadin~{\it et al}~\cite{Yadin2016}, these operations correspond to those also not consuming quantum coherence within the given computational basis.
A necessary and sufficient for strictness of incoherent operations is given below.
\begin{lemma}\label{GC:lem:SincCond}
\cite{Meznaric2013,Yadin2016,Chitambar2016-2}
An incoherent operation $\Lambda$ is strict incoherent if and only if it possesses a set of Kraus operators $\{\hat{F}_i\}$ such that for all quantum states $\hat{\varrho}$ holds
\begin{equation}\label{GC:eq:SincCond}
\forall i:\quad \Delta[\hat{F}_i\hat{\varrho}\hat{F}_i^\dag] = \hat{F}_i\Delta[\hat{\varrho}]\hat{F}_i^\dag.
\end{equation}
\end{lemma}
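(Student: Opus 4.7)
The plan is to prove both directions by a direct structural analysis of the Kraus form announced in Eq.~\eqref{GC:eq:incohKraus}, comparing the two sides of Eq.~\eqref{GC:eq:SincCond} evaluated on a generic pure input.

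\textbf{Sufficiency ($\Leftarrow$).} Assume $\Lambda$ admits a Kraus set $\{\hat F_i\}$ of the strict-incoherent form $\hat F_i=\sum_k c_{i,k}\ket{k}\bra{j_i(k)}$ with $j_i$ injective, and verify Eq.~\eqref{GC:eq:SincCond} by direct computation. Expanding $\hat F_i\hat\varrho\hat F_i^\dag=\sum_{k,k'}c_{i,k}c_{i,k'}^*\langle j_i(k)|\hat\varrho|j_i(k')\rangle\,\ket{k}\bra{k'}$ and applying the dephasing map $\Delta$ retains only the $k=k'$ terms, giving $\sum_k|c_{i,k}|^2\langle j_i(k)|\hat\varrho|j_i(k)\rangle\,\ketbra{k}$. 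For the right-hand side, $\Delta[\hat\varrho]=\sum_\ell\langle\ell|\hat\varrho|\ell\rangle\ketbra{\ell}$ and, since $j_i$ is one-to-one, only $\ell=j_i(k)=j_i(k')$ survives, yielding the same expression. Hence the equality holds for every $\hat\varrho$ and every Kraus operator.

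\textbf{Necessity ($\Rightarrow$).} Assume $\Lambda$ is incoherent, so every Kraus operator has the general form $\hat F=\sum_k c_k\ket{k}\bra{\psi_k}$ with $\ket{\psi_k}=\sum_{\ell\in\E_k}\alpha_{k\ell}\ket{\ell}$ and the $\E_k$ pairwise disjoint, and suppose Eq.~\eqref{GC:eq:SincCond} holds for every $\hat\varrho$. Testing on $\hat\varrho=\ketbra{\phi}$ for an arbitrary pure state $\ket\phi$, the left-hand side evaluates to $\sum_k|c_k|^2|\langle\psi_k|\phi\rangle|^2\ketbra{k}$, while the disjointness of the $\E_k$ kills every off-diagonal contribution on the right-hand side and produces $\sum_k|c_k|^2\bigl(\sum_{\ell\in\E_k}|\alpha_{k\ell}|^2|\langle\ell|\phi\rangle|^2\bigr)\ketbra{k}$. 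Equating coefficients of each $\ketbra{k}$ gives
\begin{equation}
\Bigl|\sum_{\ell\in\E_k}\alpha_{k\ell}^{*}\langle\ell|\phi\rangle\Bigr|^{2}=\sum_{\ell\in\E_k}|\alpha_{k\ell}|^{2}|\langle\ell|\phi\rangle|^{2},
\end{equation}
required for every $\ket\phi$. Since the amplitudes $\langle\ell|\phi\rangle$ with $\ell\in\E_k$ can be varied independently (by choosing $\ket\phi$ supported on the $\E_k$ subspace), this is exactly the saturation case of the Cauchy--Schwarz inequality $|\sum_\ell \alpha^*_{k\ell}\,z_\ell|^2\leq(\sum_\ell|\alpha_{k\ell}|^2)(\sum_\ell|z_\ell|^2)$ holding for \emph{all} $(z_\ell)$, which forces at most one $\alpha_{k\ell}$ to be nonzero. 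Therefore $\ket{\psi_k}=\ket{j(k)}$ for some basis element, and the disjointness of the $\E_k$ makes $k\mapsto j(k)$ injective, recovering precisely the strict-incoherent form.

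\textbf{Main obstacle.} The computational verification of the forward direction is routine; the subtle step is the backward direction, specifically the argument that the all-inputs equality forces each $\ket{\psi_k}$ to collapse onto a single basis vector. The care lies in (a) noting that the disjointness of the $\E_k$ makes the different $k$-terms decouple so that one obtains an equality per index $k$, and (b) arguing that the resulting scalar identity, required for \emph{every} $\ket\phi$, is the Cauchy--Schwarz saturation condition for arbitrary coefficient vectors, hence pinning the support of $\ket{\psi_k}$ to a single basis element. Once this rigidity is established, injectivity of $j$ is automatic from disjointness.
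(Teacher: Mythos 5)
The paper does not actually prove this lemma; it imports it from Refs.~\cite{Meznaric2013,Yadin2016,Chitambar2016-2}, so there is no in-text argument to compare against. Your self-contained proof is correct in both directions: the forward computation is routine and right, and the backward direction correctly isolates, for each output index $k$, the scalar identity $\bigl|\sum_{\ell\in\E_k}\alpha_{k\ell}^{*}z_\ell\bigr|^{2}=\sum_{\ell\in\E_k}|\alpha_{k\ell}|^{2}|z_\ell|^{2}$ for all coefficient vectors $(z_\ell)$, which indeed forces $\ket{\psi_k}$ onto a single basis element. Two small points of hygiene. First, calling that identity ``the saturation case of Cauchy--Schwarz'' is imprecise: writing $w_\ell=\alpha_{k\ell}^{*}z_\ell$, the condition is $|\sum_\ell w_\ell|^{2}=\sum_\ell|w_\ell|^{2}$ for all $w$ supported on the support of $\alpha_k$, i.e.\ the vanishing of all cross terms $w_\ell\overline{w_{\ell'}}$, $\ell\neq\ell'$; this is what pins the support to one element (e.g.\ choose $z$ supported on two indices with a relative phase swept over $1$ and $i$). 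The conclusion is unaffected, but the cited inequality has a different right-hand side than the one you need. Second, you assume the Kraus set satisfying Eq.~\eqref{GC:eq:SincCond} is already of the incoherent form~\eqref{GC:eq:incohKraus}; this is consistent with the paper's convention that the defining Kraus operators of an incoherent operation have that form, but it is worth noting that Eq.~\eqref{GC:eq:SincCond} applied to the basis states $\hat\varrho=\ketbra{\ell}$ (for which $\Delta$ acts trivially) already forces each $\hat{F}_i\ket{\ell}$ to be proportional to a basis ket, so the assumption costs nothing.
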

\noindent Here, $\Delta[\cdot]=\sum_i \bra{i}\cdot\ket{i}\ketbra{i}$ is the fully depolarizing map.
Equation~\eqref{GC:eq:SincCond} is sometimes notationally compressed into a commutation relation as $[\Delta,\hat{F}_i] = 0$, where the implicit multiplication must be understood as a concatenation of superoperators.

In nonlocal scenarios, the relevant class of operations to our study is local incoherent operations and classical communication (LICC) where Alice and Bob perform only incoherent operations and share their possible outcomes via a classical channel.
We also note that $\text{LICC}\subset\text{S}$~\cite{Streltsov2017,Chitambar2016}.

\section{Nonclassicality in Quantum Coherence and Quantum Computation}


\subsection{A Computational Perspective on Nonclassicality}\label{Sec:CompNonCl}

Let us begin with the definition of ``a nonclassical physical process'' from a computational perspective, highlighting the role of computational efficiency in our physical picture.
A valid empirical theory is one which is plausibly testable and falsifiable as per below.
Let $\T$ be a physical theory and $\PP$ a physical process consisting of preparations, transformations, and measurements of some physical system.
Then, testing the theory $\T$ in process $\PP$ consists of three steps:
\begin{enumerate}
\item Write down the equations provided by $\T$ that are assumed to govern $\PP$.
\item {\it Efficiently compute} the predictions of $\T$ regarding the outcomes of measurements on the outputs of $\PP$.
This can be deterministic or probabilistic, e.g.,\ quantum theory is intrinsically nondeterministic.
Here, by efficient we mean probabilistic in polynomial time.
\item Compare the predictions of $\T$ against the experimental results obtained in $\PP$ and check the validity of the theory.
\end{enumerate}
\noindent First note that, we only speak of the properties of {\it processes} rather than {\it systems} with respect to given theories.
Second, for the above procedure to be consistent it is crucial that the computation used in step 2 be itself efficiently described by $\T$.
To clarify the reason, denote the specific computational process leading to predictions of $\T$ for $\PP$ by $\PP^\star$.
Both $\PP^\star$ and $\PP$ are physical processes irrespective of the presumed underlying theory, therefore, the fact that the particular computation $\PP^\star$ resembles $\PP$ also means that $\PP$ replicates $\PP^\star$.
It immediately follows that if a theory other than $\T$, say $\T^\star$, is necessary for efficiently describing $\PP^\star$, then it must be necessary for an efficient description of $\PP$ too.
Equivalently, if we assume that $\T$ provides a sufficient explanation for $\PP$, then it must also recount $\PP^\star$.  
All theories including classical ones, quantum mechanics, general relativity, etc., are physical theories that have been subject to such tests.
A non-$\T$-process can now defined as follows.
\begin{definition}\label{GC:def:Tproc}
A process $\PP$ is said not to be a $\T$-process (or, said to be a non-$\T$-process) if and only if $\T$ fails the three-step validity test in $\PP$.
\end{definition}
\noindent For instance, spectroscopic measurement of a black-body radiation is a nonclassical process because a classical theory fails to give an account for it in terms of the above test.

The important point that is commonly missed in assigning the adjective ``$\T$'' to a process, however, is the role of the computational efficiency in step 2.
Suppose that we have a theory $\T^\star$ for which we cannot efficiently compute (at least approximately up to some error $\varepsilon$) the result of its equations for a given physical process $\PP^\star$ on a computer efficiently described by $\T^\star$. 
Then, it would be practically implausible for us to figure out if $\T^\star$ passes the validity test in $\PP^\star$.
In other words, we do not have a way to determine within a reasonable time if $\T^\star$ is the suitable theory for describing $\PP^\star$ without running into contradictions.
Therefore, it is meaningless to consider $\PP^\star$ a $\T^\star$-process.
Similarly, if there exists a process for which we cannot efficiently compute the predictions of the classical theory on a classical computer, it cannot carry the prefix ``classical''.
Hence, in information science and from an operational perspective, all classical physical processes are premised to be \emph{efficiently} simulatable on a (probabilistic or deterministic) classical Turing machine, corresponding to the $\rm BPP$ class of computational complexity.
\begin{crit}\label{GC:crit:classicality}
A physical process that cannot be efficiently simulated on a classical computer is nonclassical.
\end{crit}
\noindent We emphasize here that, Criterion~\ref{GC:crit:classicality} only provides a sufficient condition, meaning that, not every nonclassical process is not efficiently simulatable on classical computers.
For instance, many quantum processes can be efficiently classically simulated.


\subsection{Nonclassicality in Resource Theory of Coherence}

Superpositions of generic states of a physical system are not allowed in classical theories.
Hence, coherence is considered to be a unique feature of post-classical theories.
Thus, one can also investigate nonclassicality within the framework of quantum theory of coherence.
It is sometimes stated that strict incoherent operations are classical ones in the resource theory of coherence because they are represented by stochastic transformations with respect to the computational basis $\E$, resembling a classical process (see e.g.,\ Refs.~\cite{Meznaric2013,Yadin2016}).
This conclusion, however, is debatable in view of our discussion in Sec.~\ref{Sec:CompNonCl} as follows.
Consider the map $\Upsilon\s{D}=\Delta\circ\Upsilon^\star\circ\Delta$ where $\Upsilon^\star$ is a quantum computation, where we assume that it is fixed by a given set of input parameters up to the input quantum state.
$\Delta[\cdot]{=}\sum \langle i|\cdot|i\rangle |i\rangle\langle i|$ represents the fully depolarizing map with respect to the computational basis $\mathcal{E}{=}\{|i\rangle\}$ and $\circ$ is the composition operation between superoperators.
Now, assuming any Kraus representation of the computation process, say $\Upsilon^\star(\cdot)=\sum_i\hat{F}_i(\cdot)\hat{F}_i^\dag$, we have
\begin{equation}
\Upsilon\s{D}=\Delta\circ\Upsilon^\star\circ\Delta =\sum_{ijk}\ketbra{k}\hat{F}_i\ketbra{j}\cdot\ketbra{j}\hat{F}_i^\dag\ketbra{k}.
\end{equation}
Due to the fact that the set of input states of a fixed size to the computation are finite, the Hilbert space, and consequently the space of operators acting on it, are considered to be finite dimensional.
Hence, we can define new indices $r=(i,j,k)$ so that $\Upsilon\s{D} =\sum_{r}c_rc_r^*\ket{s(r)}\bra{r}\cdot\ket{r}\bra{s(r)}$, with $\ket{s(r)}\in\set\s{inc}$ and $c_r=\bra{k}\hat{F}_i\ket{j}$.
This gives the set of Kraus operators for the map $\Upsilon\s{D}$ as $\{\hat{G}_r=c_r\ket{s(r)}\bra{r}\}$.
It can readily be seen that, from Lemma~\ref{GC:lem:SincCond}, $\Upsilon\s{D}$ is strict incoherent.
Notice that, because inputs and outputs of a quantum computer can always be considered to be computational basis states (i.e., incoherent states), the two depolarizing maps in $\Upsilon\s{D}$ leave them unchanged and do not affect the computation.
Therefore, for a classical user of the quantum computer, $\Upsilon\s{D}$ is computationally as powerful as $\Upsilon^\star$, implying that it cannot be classical even though it is strict incoherent.
The catch is that, there exist incoherent input states for which the map $\Upsilon\s{D}$ is not {\it efficiently} decomposable into a polynomial number of operations from a finite set of universal stochastic operations.
Such inputs correspond to the cases in which the map performs a quantum computation.

The important class of incoherent operations to our discussion in this section is thus a subset of strict incoherent operations that we name {\it universal strict incoherent} (USI) denoted by $\OO\s{USI}$. 
Elements of $\OO\s{USI}$ are those generating the symmetric group (i.e., the group of permutations) on $\E$.
In mathematical terms ${\rm sym}{\E}=\langle\!\langle \OO\s{USI} \rangle\!\rangle$, where $\langle\!\langle \cdot \rangle\!\rangle$ is the group generation operation via group composition, that is, the group is formed by repeatedly composing the elements of the generating set.
It is also straightforward to show that ${\rm sym}\E$ is isomorphic to ${\rm sym}\{1,\dots,d\}$, where $d$ is the dimensionality of $\E$.
The universality of $\OO\s{USI}$ must be understood over the set of incoherent states, that is, USI operations are necessary and sufficient to transform any incoherent state $\hat{\sigma}\in\set\s{inc}$ to any other incoherent state $\hat{\sigma}'\in\set\s{inc}$ via their composition and mixing, and subselection of outcomes.
Note also that, starting from a pure state the subselection can be disregarded.
Important to this construction is that any strict incoherent operation can be obtained from a (not necessarily efficient) composition of the elements in $\OO\s{USI}$. 
That is,
\begin{equation}\label{GC:eq:IncEffDec}
\forall \Gamma\in\OO\s{inc}: \quad \exists \{\Lambda_i\}\subseteq\OO\s{USI},
\end{equation}
such that $\Gamma[\cdot] = \Lambda_1\circ\Lambda_2\circ\Lambda_3\circ\cdots[\cdot]$.
The final remark is that, $\Upsilon\s{D}$ does not belong to $\OO\s{USI}$. 
This is because, $\Upsilon\s{D}\in\OO\s{USI}$ may hold only if we know the output of the process for all pure incoherent inputs.
However, given the fact that all the other input computation parameters are fixed, the latter implies that $\Upsilon\s{D}$, and thus $\Upsilon^\star$, would not be a computation since we already know the output of the process for all relevant input states.

Speaking of the separation between classical and quantum correlations we should first be clear about what we mean by ``nonclassicality''.
We thus first propose the following notion of ``classicality'' within the context of coherence theory.
\begin{definition}\label{GC:def:classicalObs}
Within the context of coherence theory, a classical observer is one who is restricted to universal strict incoherent operations $\OO\s{USI}$, their probabilistic mixture, and subselection.
\end{definition}
\noindent Here, the universality of strict incoherent operations implies that every \emph{incoherent} state can be obtained via a successive operation of $\OO\s{USI}$ elements on another incoherent state, their convex combination and subselection of the outcomes; see Fig.~\ref{GC:fig:Classicality}.

\begin{figure}[t!]
\begin{center}
  \includegraphics[width=0.8\columnwidth]{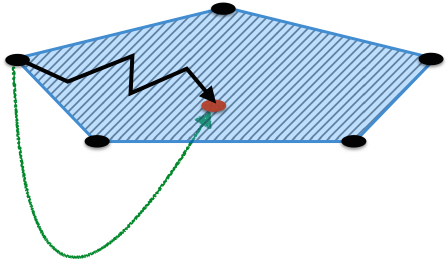}
\end{center}
  \vspace{-0em}
  \caption[Geometrical illustration of nonclassicality within the context of coherence theory]{\textbf{Geometrical illustration of nonclassicality within the context of coherence theory.}
  A classical process (the zigzag solid black line) is represented as an efficient composition of strictly incoherent operations evolving inside the set of incoherent states at all times.
  A quantum process from a classical observer's point of view (the green curve), on the other hand, is equivalent to a strict incoherent operation.
  However, it may involve generation and consumption of coherence at some stages and thus, it may partially be traversing outside the incoherent set.
   Such maps may or may not be efficiently representable as a composition of USI operations.
  }
  \label{GC:fig:Classicality}
\end{figure}

Definition~\ref{GC:def:classicalObs} naturally gives rise to the following sufficient condition for nonclassicality within the context of coherence theory.
\begin{crit}\label{GC:crit:CohNoncl}
For a classical observer equipped with a set of USI operations $\OO\s{USI}$, a process that cannot be efficiently represented as a compositions of $\OO\s{USI}$ elements and their convex combinations for at least one input state is nonclassical.
Here, by efficiency we mean a polynomial number of USI maps in the size of the input state.
\end{crit}
\noindent We also emphasize here that, not every nonclassical process is not efficiently decomposable into universal strictly incoherent operations.


\subsection{The Equivalence Theorem}

We now state the first result of the present manuscript which establishes a fundamental link between classical computation and quantum coherence formalism.
\begin{theorem}\label{CompCohIso}
There exists an isomorphism between classical computations and the formalism of coherence theory equipped with a set of USI operations.
\end{theorem}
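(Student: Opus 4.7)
The plan is to construct an explicit bijection at the level of both states and operations, and to verify that it preserves composition, convex mixing, and subselection, from which the isomorphism follows. Fix $d=|\E|$ with enumeration $\E=\{|i\rangle\}_{i=1}^d$. On the state side, I would map a classical probability distribution $\{p_i\}$ over $d$ symbols to the incoherent density operator $\phi(\{p_i\})=\sum_i p_i\ketbra{i}$. Since $\set\s{inc}={\rm conv}\{\ketbra{i}:|i\rangle\in\E\}$, this is a bijection between probability vectors on $\{1,\dots,d\}$ and $\set\s{inc}$, carrying deterministic classical states to pure incoherent states and classical randomness to convex combinations.

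On the operation side, a reversible classical gate on the alphabet $\{1,\dots,d\}$ is a permutation $\pi\in{\rm sym}\{1,\dots,d\}$, and via the isomorphism ${\rm sym}\E\cong{\rm sym}\{1,\dots,d\}$ recalled before the theorem, each such $\pi$ lifts to the unitary channel $\Lambda_\pi(\cdot)=\hat{U}_\pi\cdot\hat{U}_\pi^\dag$ with $\hat{U}_\pi=\sum_i\ketbraa{\pi(i)}{i}$, which is a USI operation by inspection. Arbitrary (possibly irreversible) classical stochastic maps are then obtained from reversible ones by the standard dilation recipe: adjoin an ancilla, apply a permutation, and marginalize. These three steps have exact counterparts in the coherence formalism --- tensor product with an incoherent ancilla in $\set\s{inc}$, application of a USI channel, and subselection on an outcome in $\E$ --- which is precisely what Definition~\ref{GC:def:classicalObs} grants a classical observer.

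Structural preservation and efficiency then come essentially for free. Composition is preserved because $\Lambda_{\pi_2\circ\pi_1}=\Lambda_{\pi_2}\circ\Lambda_{\pi_1}$ and $\phi$ is linear, while convex combinations and subselection commute with $\phi$ by construction. Efficiency is preserved in both directions: a polynomial-size reversible classical circuit in a universal gate basis (e.g.,\ Toffoli) translates into a polynomial composition of USI channels, and conversely any polynomially many USI channels compose into a polynomial-size reversible circuit after dilation.

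The main delicate step I expect to confront is the irreversible/stochastic direction --- making precise that the closure of $\OO\s{USI}$ under convex mixing and subselection exhausts every classical stochastic map. This ultimately rests on the standard observation that any stochastic transition matrix can be realized as the marginal of a permutation on an enlarged alphabet, but translating it into coherence language requires checking that the subselection step implements classical conditioning exactly and introduces no residual coherence; both of these are controlled by Lemma~\ref{GC:lem:SincCond} applied to the USI sub-class, which guarantees that USI Kraus operators commute with $\Delta$ and therefore preserve the diagonal support of incoherent states after conditioning.
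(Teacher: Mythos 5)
Your construction is essentially the paper's own: the same bijection $\{p_i\}\mapsto\sum_i p_i\ketbra{i}$ between classical probability vectors and $\set\s{inc}$, together with the induced correspondence between universal classical gates and USI operations, checked to respect composition, mixing, and efficiency. The one place you go beyond the paper is in explicitly dilating irreversible stochastic gates to permutations on an enlarged alphabet followed by subselection --- the paper simply defines $\M\s{ops}[G]=\M\s{ms}\circ G\circ\M\s{ms}^{-1}$ and asserts bijectivity onto $\OO\s{USI}$ --- so your extra care there is a tightening of the same argument rather than a different route.
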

\begin{proof}
The claimed isomorphism can be constructed as follows.
\begin{enumerate}
\item The set of all possible states of a deterministic classical computer (bits) can be represented as elements of a finite, but sufficiently large, set $\set\s{p}=\{s_i\}_{i\in\I}$ with the index set $\I={1,\dots,N}$ for some $N<\infty$.
They are perfectly distinguishable and thus, they can be mapped onto an orthonormal basis set of vectors within a Hilbert space as $\M\s{ps}:\set\s{p}\to\{\ketbra{i}\}_{i\in\I}$.
These vectors form the computational basis $\E=\{\ketbra{i}\}_{i\in\I}$.
In a probabilistic classical computer, the input as well as the readout state of the computation could be a probabilistic mixture of the pure state elements as $s=\sum_i p_is_i$ for $s_i\in\set\s{p}$ such that $\vs{p}=(p_1,\dots,p_N)$ is a vector of probabilities with $\sum_i p_i=1$.
Hence, the state space of such a computer is $\set\s{cl}=\overline{\conv\set\s{p}}$.
Clearly there is a bijection between elements of $\set\s{cl}$ and $\set\s{inc}$ as $\M\s{ms} : \set\s{cl} \to \set\s{inc}$ with $\M\s{ms} [s]=\sum_i p_i \ketbra{i}=\hat{\sigma}$.

The converse is also true.
Given a computational basis $\E=\{\ketbra{i}\}_{i\in\I}$ with the index set $\I={1,\dots,N}$ for some $N<\infty$, one can define the map $\M\s{ps}^{-1}:\E\to\set\s{p}$, where $\set\s{p}=\{s_i\}_{i\in\I}$ is a set of distinguishable states identifying different preparations of pure inputs to a classical computer.
Similarly, given a mixed incoherent state $\hat{\sigma}\in\set\s{inc}$, one can define a vector of probabilities $\vs{p}=(p_1,\dots,p_N)$ for which $\hat{\sigma}=\sum_i p_i\ketbra{i}$ and then map it onto a probabilistic state of a classical computer via $\M\s{ms}^{-1}:\set\s{inc}\to\set\s{cl}$ where $\set\s{cl}=\overline{\conv\set\s{p}}$ and $\M\s{ms}^{-1}[\hat{\sigma}] = \sum_i p_i s_i$.

\item Every classical algorithm running on a classical computer can be decomposed into a sequence of successive operations of universal classical logic gates from a finite set $\G\s{UCL}$.
Each logic gate is represented by a stochastic map acting on the state $s$ of the computer~\cite{NielsenBook}.
Importantly, such gates do not create or consume superpositions of computational states and thus, are represented by strictly incoherent transformations with respect to the defined computational basis.
As a result, the class of universal classical gates is mapped onto a subset of strictly incoherent operations as $\M\s{ops}:\G\s{UCL}\to\OO\s{USI}$ where $\M\s{ops}$ is bijective and can be implicitly defined as follows.
For every classical gate $G \in\G\s{UCL}$ and every incoherent state $\hat{\sigma}\in\set\s{inc}$, $\hat{\sigma}'=\Lambda[\hat{\sigma}]=\M\s{ops}[G][\hat{\sigma}] = \M\s{ms}\circ G \circ\M\s{ms}^{-1}[\hat{\sigma}]$.
The invertibility of $\M\s{ms}$ simply implies the invertibility of $\M\s{ops}$: for every USI quantum gate $\Lambda \in\OO\s{USI}$ and every computational state $s\in\set\s{cl}$, $s'=G{s}=\M\s{ops}^{-1}[\Lambda][s] = \M\s{ms}^{-1}\circ \Lambda \circ\M\s{ms}[s]$. 
The universality of the classical logic gates then immediately implies the universality of the strictly incoherent maps $\Lambda$ defined above over the set of incoherent states.
\end{enumerate}
The two steps above can be summarized as
\begin{equation}\label{GC:eq:CompCohIso}
\begin{split}
\set\s{cl}\quad &\xleftrightarrow[~\M\s{ms}^{-1}~]{\M\s{ms}}\quad \set\s{inc},\\
\G\s{UCL}\quad &\xleftrightarrow[~\M\s{ops}^{-1}~]{\M\s{ops}}\quad \OO\s{USI},
\end{split}
\end{equation}
establishing an isomorphism between classical computation and the structure of incoherent states equipped with a USI set of operations.
\end{proof}
Now, we use the fact that any efficient classical computation can be implemented in a polynomial number of steps in combination with the above isomorphism to conclude that the efficiency of classical algorithms implies application of a polynomial number of USI operations.
The converse is also obvious.
Consequently, every physical process that cannot be efficiently simulated on a classical computer, cannot be represented as an efficient composition of USI maps within some coherence theory and vice versa.
We formalize this in a theorem highlighting the connection between the two nonclassicality criteria above as our second result.
\begin{theorem}\label{GC:th:CohNonConnection}
The nonclassicality Criteria~\ref{GC:crit:classicality} and~\ref{GC:crit:CohNoncl} are equivalent.
\end{theorem}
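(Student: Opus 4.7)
The plan is to leverage Theorem~\ref{CompCohIso} directly: once a bijective correspondence between classical computational states and incoherent quantum states, together with a bijective correspondence between universal classical logic gates and USI operations, has been established, equivalence of the two notions of efficiency follows by transporting step-counts through the isomorphism. The proof therefore reduces to checking that polynomial-size decompositions are preserved in both directions, and that the input-size parameter (the quantity measuring ``size'' of the instance) agrees on the two sides.

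First I would set up the two implications explicitly. For the forward direction, suppose a process $\PP$ is not efficiently representable as a composition of $\OO\s{USI}$ elements and their convex combinations for some input $\hat{\sigma}\in\set\s{inc}$. If $\PP$ were efficiently simulatable on a classical computer, then by the definition of efficient classical simulation it could be implemented by a polynomial-in-$n$ number of universal classical gates from $\G\s{UCL}$ acting on the classical state $s=\M\s{ms}^{-1}[\hat{\sigma}]$, where $n$ is the input size. Applying $\M\s{ops}$ gate-by-gate yields a polynomial-in-$n$ composition of USI operations implementing $\PP$ on $\hat{\sigma}$, contradicting the assumption. Hence $\PP$ is nonclassical by Criterion~\ref{GC:crit:classicality}. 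For the reverse direction, exactly the same argument runs with $\M\s{ops}^{-1}$ in place of $\M\s{ops}$ and $\M\s{ms}^{-1}$ in place of $\M\s{ms}$: a polynomial-size USI decomposition of $\PP$ on every incoherent input transfers, gate-by-gate, to a polynomial-size classical algorithm acting on the corresponding probabilistic classical state.

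A subtle point I would make explicit is that convex combinations on the coherence side correspond to probabilistic mixtures of classical runs, which are exactly the operations available to a probabilistic ($\mathrm{BPP}$) classical computer. Similarly, subselection on outcomes of USI operations corresponds to conditioning on classical measurement outcomes, so restricting attention to compositions and convex combinations (with possible subselection) on one side matches the free operations on the other. This ensures that the mapping $\M\s{ops}$ preserves not just single-gate structure but the full stochastic/compositional structure relevant to efficient simulation.

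The only nontrivial obstacle is bookkeeping of the size parameter: I need to verify that the dimension $d=|\E|$ of the computational basis scales with the number $n$ of classical bits in the standard way, $d=2^n$, so that ``polynomial in $n$'' and ``polynomial in $\log d$'' mean the same thing on both sides of the isomorphism. Once this identification is fixed (which is implicit in the setup of $\set\s{p}$ and $\E$ in the proof of Theorem~\ref{CompCohIso}), the contrapositive statements yield the equivalence of Criteria~\ref{GC:crit:classicality} and~\ref{GC:crit:CohNoncl}, completing the proof.
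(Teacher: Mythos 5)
Your proposal is correct and follows essentially the same route as the paper: the paper also proves the equivalence by transporting polynomial-length gate decompositions through the isomorphism of Theorem~\ref{CompCohIso}, arguing that efficiency of a classical algorithm corresponds to a polynomial number of USI operations and vice versa. Your explicit treatment of the contrapositives, of the correspondence between convex combinations/subselection and probabilistic classical runs, and of the input-size parameter is more careful than the paper's brief argument, but it is the same idea.
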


Our third result, which is an immediate consequence of Theorems~\ref{CompCohIso} and~\ref{GC:th:CohNonConnection}, and proved within Appendix~\ref{App:2}, tells us when it is \emph{not} possible to do quantum computation.
\begin{theorem}\label{GC:th:QConNecQComp}
Production or consumption of quantum coherence provides the necessary resource for the exponential speed~up of quantum computations versus classical ones.
\end{theorem}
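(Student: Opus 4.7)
The natural approach is to establish the contrapositive using the isomorphism of Theorem~\ref{CompCohIso} together with the equivalence of nonclassicality notions from Theorem~\ref{GC:th:CohNonConnection}. The plan is to show that if a quantum computation $\Upsilon^{\star}$ never produces or consumes coherence along its execution, then it must admit an efficient classical simulation, so that the exponential speed-up of genuinely quantum computations forces coherence to be engaged somewhere.

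First I would make the hypothesis precise: no production or consumption of coherence means that, starting from an incoherent input in $\set\s{inc}$, every intermediate state after each elementary gate remains in $\set\s{inc}$. Because a quantum circuit uses a fixed universal gate set whose gates must preserve incoherence on every basis input they could encounter, each gate is strict incoherent by Lemma~\ref{GC:lem:SincCond}. Next, since each elementary gate acts on a bounded number of qubits, it lives in a Hilbert space of bounded dimension and, by Eq.~\eqref{GC:eq:IncEffDec}, decomposes into at most a constant number of $\OO\s{USI}$ operations together with convex mixing. Concatenating across the $\mathrm{poly}(n)$ gates of the circuit, the entire map $\Upsilon^{\star}$ inherits a polynomial-size decomposition into elements of $\OO\s{USI}$.

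Third, I would transport this decomposition through the isomorphism of Theorem~\ref{CompCohIso}. Applying $\M\s{ops}^{-1}$ to each USI factor gives a polynomial sequence of universal classical gates in $\G\s{UCL}$ acting on $\M\s{ms}^{-1}[\hat{\sigma}]\in\set\s{cl}$, whose read-out statistics, by the bijection $\M\s{ms}$, coincide with those of the original computation on basis outputs. This is an efficient probabilistic classical simulation of $\Upsilon^{\star}$, which by Criterion~\ref{GC:crit:classicality} and Theorem~\ref{GC:th:CohNonConnection} contradicts any exponential speed-up, completing the contrapositive.

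The delicate step will be the passage from \emph{the realised trajectory happens to stay in $\set\s{inc}$} to \emph{each gate is strict incoherent}, since in principle a gate can be incoherent on one input and coherence-generating on another. I would handle this either by restricting to uniform circuit families whose gates are strict incoherent by construction, or, more robustly, by working with the dephased map $\Upsilon\s{D}=\Delta\circ\Upsilon^{\star}\circ\Delta$: as argued in the paragraph preceding Eq.~\eqref{GC:eq:IncEffDec}, $\Upsilon\s{D}$ is strict incoherent and, because quantum-computational inputs and read-outs are elements of $\E$, it reproduces the same output statistics as $\Upsilon^{\star}$. The exponential speed-up hypothesis then transfers to $\Upsilon\s{D}$, and Criteria~\ref{GC:crit:classicality} and~\ref{GC:crit:CohNoncl} force $\Upsilon\s{D}$ to lack any efficient $\OO\s{USI}$ decomposition; via the isomorphism this is precisely the statement that coherence must be generated or consumed at some stage of $\Upsilon^{\star}$.
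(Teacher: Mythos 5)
Your proposal is correct and follows essentially the same route as the paper's own proof: assume no coherence is produced or consumed, reduce the computation to a polynomial number of USI operations, and transport this through the isomorphism of Theorem~\ref{CompCohIso} to obtain an efficient classical simulation, contradicting the exponential speed-up. The only substantive difference is that you explicitly justify the passage from ``the realised trajectory stays in $\set\s{inc}$'' to ``each gate is strict incoherent, hence a bounded composition of $\OO\s{USI}$ elements'' (via bounded gate dimension and the dephased map $\Upsilon\s{D}$), a step the paper's proof asserts without comment by directly identifying the universal gate set with a USI set.
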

\noindent Having these results at hand, we can now answer the question ``What do we learn about quantum correlations from collaborative quantum computing?''


\section{A Toy Protocol}

Our aim is to use Criterion~\ref{GC:crit:classicality}, i.e., the power of quantum computation models, to show the quantumness of correlations.
To this end, we need to consider quantum computation protocols that are nonlocal and use mixed quantum states,
noting that any pure correlated quantum state is necessarily entangled.
We thus first consider a toy protocol for which we already have available the minimal tools for a characterization of the resources used.

In classical computations, to run large computational tasks on multiple supercomputers in parallel and then combine their outputs to get a final result is a common protocol; a model called \emph{distributed computing}.
It is thus intriguing to consider a situation in which each of the servers is equipped with a quantum computer to run quantum computations.
In such scenarios, the input to each server is possibly classical information accompanied with quantum states.
We consider two versions of a distributed computing task in which a client, Charlie, exploits nondiscordant states to run quantum computations on two servers, Alice and Bob.
We assume that the following rules apply:
(i)~servers are forbidden to communicate;
(ii)~they do not have access to any sources of quantum states ---they can only perform unitary transformations and make destructive measurements on their outputs;
(iii)~the client, on the other hand, does not possess any quantum processors ---he may only have limited capability of preparing quantum states.
We also assume that there are no losses, inefficiencies, or errors, as they are not essential to our arguments and conclusions about the quantumness of the correlations.

\paragraph*{Task 1.---}
Charlie has classical descriptions of two $n_{\rm X}$-qubit unitary matrices $\hat{U}_{\rm X}$ (${\rm X}{=}{\rm A},{\rm B}$) in terms of polynomial sized network of universal gates.
His task is to estimate the quantity 
\begin{equation}\label{iota}
\iota = \frac{{\rm Tr}\hat{U}_{\rm A}{\cdot}{\rm Tr}\hat{U}_{\rm B}}{2^{n_{\rm A}+n_{\rm B}}}.
\end{equation}
He can also prepare up to two pure qubit states---any other states are maximally mixed.
 
There is strong evidence to suggest that estimating the normalised trace of a $n$-qubit unitary matrix $\mathrm{Tr}\hat{U}/2^n$, generated from a polynomial sized network of universal gates, is hard for a classical computer~\cite{Datta2005,Datta2007,Datta2008,Morimae2014,Fujii2014,Fujii2016}. 
These hardness arguments imply that even with a classical description of the polynomial sized network forming the unitary, Charlie (as well as Alice and Bob) cannot efficiently estimate the normalised trace $\iota$ using his classical resources.
The latter follows from the fact that, assuming an exponential growth in the classical resources required to estimate the normalized trace of one unitary ${\rm Tr}\hat{U}_{\rm X}/2^{n_{\rm X}}$ ($\rm X=A,B$) with the number of input qubits $n_{\rm X}$, classically estimating their normalized tensor product will also require at least an exponential effort in the total number of input qubits $n_{\rm A}+n_{\rm B}$.
Thus, he is encountering a classically challenging task.
He can, however, conquer the difficulty with the help of the two servers using a duplicated DQC1 protocol~\cite{DQC1}, termed here as \textit{nonlocal deterministic quantum computing with two qubits} (NDQC2).

\begin{figure}[h]
  \includegraphics[width=\columnwidth]{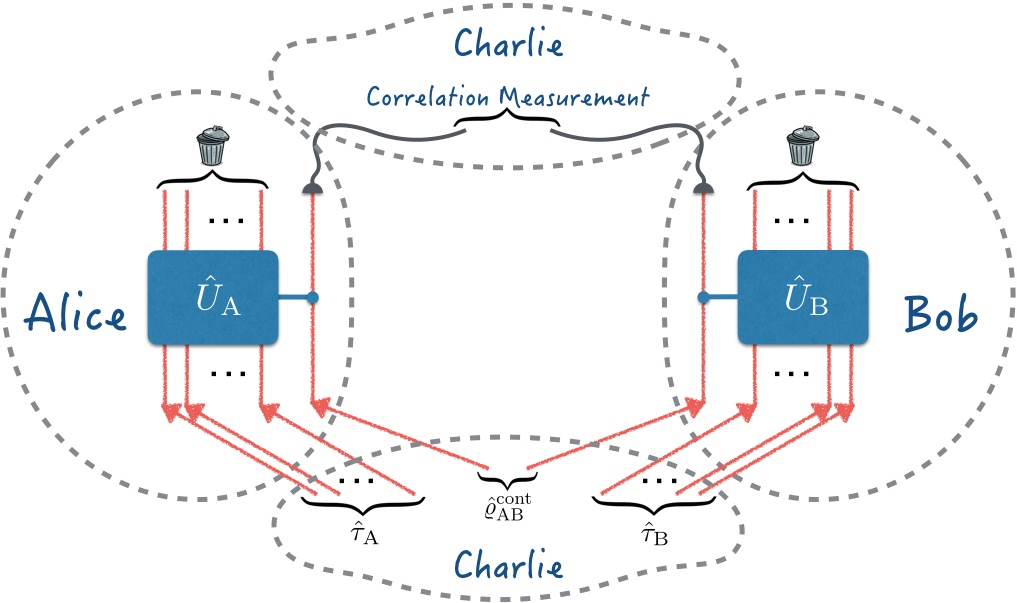}
  \caption{\textbf{The schematic of a nonlocal deterministic quantum computation with two qubits (NDQC2).}
  A client, Charlie, who does not possess any quantum processors aims to estimate the quantity $\iota ={\rm Tr}\hat{U}_{\rm A}{\cdot}{\rm Tr}\hat{U}_{\rm B}/2^{n_{\rm A}+n_{\rm B}}$.
  This is believed to be hard to perform on a classical computer.
  Therefore, he asks two servers, Alice and Bob, who are capable of performing unitary transformations and making destructive measurements to realize the controlled unitaries $\hat{U}_{\rm X}^{\rm cont}$ (${\rm X}{=}{\rm A},{\rm B}$).
  The servers are forbidden to communicate and do not have access to any sources of quantum states.
  Charlie then sends  strictly classical states to the servers and receives the results of the measurements of the Pauli operators as per Eq.~\eqref{CorrM}.
  By manipulating the received data, he is able to efficiently estimate $\iota$.
  Depending on his choice of state, Charlie is also able to hide the local estimates from Alice and Bob without reducing his global computational power. 
  }\label{NDQC2Scheme}
\end{figure}

\paragraph*{NDQC2 (see Fig.~\ref{NDQC2Scheme}).---}
We assume it is always possible for Charlie to ask Alice and Bob to realize the controlled unitaries $\hat{U}_{\rm X}^{\rm cont}{=} |0\rangle_{\rm X}\langle 0|{\otimes} \hat{I}_{\rm X} {+} |1\rangle_{\rm X}\langle 1|{\otimes} \hat{U}_{\rm X}$ (${\rm X}{=}{\rm A},{\rm B}$), respectively~\cite{Barenco1995}.
He then prepares the two-qubit control system in either of the two pure product (nondiscordant) states $\hat{\varrho}^{\rm cont}_{{\rm AB};1}{=}|\pm\rangle_{\rm A}\langle \pm|\otimes|\pm\rangle_{\rm B}\langle \pm|$, where $|\pm\rangle_{\rm X}{=}(|0\rangle_{\rm X} \pm |1\rangle_{\rm X})/\sqrt{2}$, and the ancillary qubits in the maximally mixed states, $\hat{\tau}_{\rm X}{=}\hat{\mathbbmss{I}}^{\otimes n_{\rm X}}/2^{n_{\rm X}}$, and sends them to the servers.
Alice and Bob operate on their respective ancillae and control inputs, locally and independently, to obtain $\hat{\varrho}^{\rm out}_{{\rm X};1}{=}\hat{U}_{\rm X}^{\rm cont} (|\pm\rangle_{\rm X}\langle \pm|{\otimes}\hat{\tau}_{\rm X}) \hat{U}_{\rm X}^{{\rm cont}\dag}$, and make measurements of the Pauli operators on the output control states,
${\rm Tr} \hat{\varrho}^{\rm out}_{{\rm X};1} [\left(\hat{\bm{\sigma}}_{x;{\rm X}} {+} \bm{i}\hat{\bm{\sigma}}_{y;{\rm X}}\right){\otimes}\hat{\mathbbmss{I}}^{\otimes n_{\rm X}}] {=} {\pm} {\rm Tr}\hat{U}_{\rm X}/2^{n_{\rm X}}$.
Finally, the servers send their statistics to Charlie, who will combine them to obtain an estimate of
\begin{equation}\label{CorrM}
\iota{=}{\prod_{{\rm X}={\rm A},{\rm B}}}{\left\langle\hat{\bm{\sigma}}_{x;{\rm X}} {+} \bm{i}\hat{\bm{\sigma}}_{y;{\rm X}}  \right\rangle} {=} \left\langle{\bigotimes_{{\rm X}={\rm A},{\rm B}}}\left(\hat{\bm{\sigma}}_{x;{\rm X}} {+} \bm{i}\hat{\bm{\sigma}}_{y;{\rm X}}\right) \right\rangle,
\end{equation}
where $\langle\cdot\rangle$ denotes the quantum expectation value of the output control state.

\paragraph*{Task 2.---}
Consider Task 1 where Charlie also wants to hide the local estimates ${\rm Tr}\hat{U}_{\rm X}/2^{n_{\rm X}}$ from Alice and Bob at all times, given the constraints (i)-(iii) on the protocol.

Clearly Task 2 is classically, if not impossible, as hard as Task 1, because estimating the normalized trace of the global unitary $\hat{U}_{\rm A}\otimes\hat{U}_{\rm B}/2^{n_{\rm A}+n_{\rm B}}$ is also classically hard.
However, this can be done efficiently using NDQC2 protocol above if Charlie prepares the control in the nondiscordant state
\begin{equation}\label{ABinput}
\hat{\varrho}_{{\rm AB};2}^{\rm cont}=\frac{1}{2}\sum_{x=\pm}|x\rangle_{\rm A}\langle x|{\otimes}|x\rangle_{\rm B}\langle x|.
\end{equation}
Following the same procedure as in Task~1 and making the same measurements, the correlations within the measurement outcomes are processed by Charlie as per the r.h.s of Eq.~\eqref{CorrM}, which results in $\iota {=} \left\langle\left(\hat{\bm{\sigma}}_{x;{\rm A}} {+} \bm{i}\hat{\bm{\sigma}}_{y;{\rm A}}\right) \left( \hat{\bm{\sigma}}_{x;{\rm B}} {+} \bm{i}\hat{\bm{\sigma}}_{y;{\rm B}}\right)  \right\rangle$.
The marginals of the control state~\eqref{ABinput} are maximally mixed states, so that independent measurements do not result in any information about $\iota$.
This hides the local estimates from Alice and Bob.
Hence, NDQC2 enables a classically hard collaborative task only using correlated inputs and correlation measurements.


\subsection{Quantum Correlations in NDQC2.}

As we discussed earlier, any process that cannot be efficiently simulated on a classical computer is \emph{nonclassical}.
We have shown that NDQC2 cannot be efficiently simulated using classical resources, i.e., classical communication \emph{and} classical computers held by Alice and Bob, implying that it is an example of a nonlocal nonclassical process.

We are now able to give an operational meaning to the term ``quantum correlations'', inspired by the NDQC2 toy model.
First, from Criterion~\ref{GC:crit:classicality}, we infer the quantumness of the resources used in collaborative quantum computations from the nonclassical advantages obtained in them.
Second, whenever the locally accessible quantum resources are inadequate to fully account for such advantages, they are necessarily the result of correlations.
A closer look at the NDQC2 protocol shows that in Tasks~2, not only the input, but also the output state is nondiscordant with respect to the Alice-Bob partitioning in which the correlations are measured, because the global operation $\hat{U}_{\rm AB}^{\rm cont}{=}\hat{U}_{\rm A}^{\rm cont}{\otimes}\hat{U}_{\rm B}^{\rm cont}$ preserves local orthonormality of bases.
In addition, from Eq.~\eqref{ABinput}, it is clear that there is no local entanglement or discord within each server in Task~2 in contrast to the DQC1 protocol~\cite{Datta2005,Datta2008}.
In fact, in this case, Alice and Bob have no local quantum computational resources as they only receive locally maximally-mixed states.
Hence, from the common perspective of quantum information theory, the input and output states in this scenario are considered to possess no quantum correlations between Alice and Bob.
One thus should wonder if there is nothing quantum going on locally, and there is nothing quantum about the correlations between Alice and Bob as characterized by the standard measures of quantum information, then where does the quantum power of the {\it joint} Alice-Bob party in NDQC2 come from?
And why is that obtained only through correlation measurements?
Our answer is that the computational power of NDQC2 in this case is indeed a manifestation of quantumness of correlations distributed between Alice and Bob through the input quantum state.
Importantly, the standard classification of quantum correlations does not account for these sort of correlations.

One might object to calling these correlations quantum by considering the following scenarios: 
(i) suppose that the local servers in our toy model are granted the ability to prepare quantum states.
Then, Charlie can send classical encrypted messages instructing the servers to prepare either the superposition state $|+\rangle_{\rm X}$ or $|-\rangle_{\rm X}$ with equal probabilities to Alice and Bob, where they would have created the state $\hat{\varrho}_{{\rm AB};2}^{\rm cont}$ in Eq.~\eqref{ABinput} locally without accessing the content of the message, and thus, simulating the protocol locally.
Regardless of the complexity of such a \emph{semi-classical} protocol compared to ours, we emphasize that the possibility to prepare the states via LOCC does not imply the classicality of its inherent correlations.
Similarly, any separable discordant state can be prepared using LOCC, and yet it is believed that quantum discord implies quantum correlations;
(ii) if Alice and Bob have access to local quantum resources, i.e., perfect qubits, they may extract the correlations encoded within the input state and access the local estimates of ${\rm Tr}\hat{U}_{\rm X}/2^{n_{\rm X}}$.
Equivalently, if they are allowed to communicate during the protocol, then they can obtain the same information from the correlations as Charlie does.
In this case, the resolution is that the extractibility of the encoded probability distribution and the final result of the computation also does not imply the classicality of the correlations within quantum states.
The counterexample is, for instance, a one-way discordant state.
If the two-way classical communication is allowed between parties, then they can extract the probability distribution encoded within such states.
However, one-way discordant states are also quantum correlated.
The lesson we learn is thus that the quantumness of correlations can be revealed only if \emph{appropriate restrictions} are imposed on particular tasks.
In our case, the required restrictions are exactly (i)-(iii) given for the NDQC2.




\subsection{Global Coherence in NDQC2}
In order to characterize quantum correlations present in NDQC2, we should identify the resources empowering it.
We observed in Theorem~\ref{GC:th:QConNecQComp} that quantum coherence is necessary for the exponential speed-up of quantum computers.
Recently, it has been shown that coherence also provides the sufficient resource for the particular case of DQC1 protocol in the sense that the precision of the quantity estimated in DQC1 is a function of the amount of quantum coherence inherent within the input state~\cite{Matera2016}.
Since NDQC2 enjoys a construction similar to DQC1, we anticipate that the power of NDQC2 in Tasks~1 and~2 is also due to the coherence of the input and output states.
Here we show this fact quantitatively.

We start by choosing the relative entropy of coherence (REC)~\cite{Baumgratz2014,Winter2016} as our measure of coherence.
For any density operator $\hat{\varrho}$, the REC is given by
\begin{equation}
\mathfrak{C}_{\rm r}(\hat{\varrho}){:=}\mathtt{S}(\Delta[\hat{\varrho}]){-}\mathtt{S}(\hat{\varrho}),
\end{equation}
in which $\mathtt{S}(\hat{\varrho}){=}{-}{\rm Tr}\hat{\varrho}\log{\hat{\varrho}}$ is the von Neumann entropy.
We omit the dependence of $\mathfrak{C}_{\rm r}$ is the basis from the function's argument for brevity.
REC satisfies the three main requirements for any faithful measure of quantum coherence:
(i) a quantum state is incoherent if and only if $\mathfrak{C}_{\rm r}(\hat{\varrho}){=}0$;
(ii) it is nonincreasing on average under all incoherent transformations, i.e., $\mathfrak{C}_{\rm r}(\hat{\varrho})\geqslant \sum_i p_i\mathfrak{C}_{\rm r}(\hat{\varrho_i})$ where $p_i$ is the probability of obtaining $\hat{\varrho}_i$ upon measurement;
(iii) it is convex, i.e., $\mathfrak{C}_{\rm r}(\sum_i p_i\hat{\varrho}_i)\leqslant \sum_i p_i\mathfrak{C}_{\rm r}(\hat{\varrho_i})$ for any set of states $\{\hat{\varrho}_i\}$ and probability distribution $\{p_i\}$.
Operationally, REC is equivalent to the distillable coherence and quantifies the optimal rate at which maximally coherent states can be prepared from infinitely many copies of a given mixed state using incoherent operations~\cite{Winter2016}.

First, we choose the local computational bases in our protocol to be $\mathcal{E}_{\rm X}{=}\{|0\rangle_{\rm X}\otimes|\xi_i\rangle_{\rm X},|1\rangle_{\rm X}\otimes|\xi_i\rangle_{\rm X}\}_{i=1}^{n_{\rm X}}$, where $\{|\xi_i\rangle_{\rm X}\}_{i=1}^{n_{\rm X}}$ are eigenvectors of $\hat{U}_{\rm X}$ for ${\rm X}={\rm A},{\rm B}$.
Now, we see that, in both Tasks~1 and~2, the input states $\hat{\varrho}^{\rm in}_{{\rm AB};1}{=}\hat{\varrho}^{\rm cont}_{{\rm AB};1}{\otimes}\hat{\tau}_{\rm A}{\otimes}\hat{\tau}_{\rm B}$ and $\hat{\varrho}^{\rm in}_{{\rm AB};2}{=}\hat{\varrho}^{\rm cont}_{{\rm AB};2}{\otimes}\hat{\tau}_{\rm A}{\otimes}\hat{\tau}_{\rm B}$ are globally coherent with respect to the global computational basis $\mathcal{E}_{\rm AB}{=}\mathcal{E}_{\rm A}{\otimes}\mathcal{E}_{\rm B}$ with $\mathfrak{C}_{\rm r}(\hat{\varrho}^{\rm in}_{{\rm AB};1}) {=} 2\log{2}$ and $\mathfrak{C}_{\rm r}(\hat{\varrho}_{{\rm AB};2}^{\rm in}){=}\log{2}$, respectively.

Second, we consider the local coherences of the marginal states in Tasks~1 and~2 to obtain $\mathfrak{C}_{\rm r}(|\pm\rangle_{\rm X}\langle \pm|{\otimes}\hat{\tau}_{\rm X}) {=} \log{2}$ and $\mathfrak{C}_{\rm r}(\hat{\mathbbmss{I}}_{\rm X}/2{\otimes}\hat{\tau}_{\rm X}) {=} 0$, respectively in each task.
From Refs.~\cite{Ma2016,Matera2016} we know that the less the input coherence to a DQC1 protocol is, the worse the estimation of the normalized trace of the unitary will be.
Therefore, these values justify the fact that in Task~1 local traces are accessible to Alice and Bob, due to the \emph{local} computational powers provided by \emph{locally} coherent resources, while in Task~2 they remain hidden to them because no \emph{local} computational power is available to parties.
We emphasize here that $\hat{U}_{\rm AB}^{\rm cont}{=}\hat{U}_{\rm A}^{\rm cont}{\otimes}\hat{U}_{\rm B}^{\rm cont}$ neither increases nor decreases the amount of global and local REC, as $\mathcal{E}_{\rm AB}$ is an eigenbasis of $\hat{U}_{\rm AB}^{\rm cont}$~\cite{Peng2016}.

Now, we show that only global coherence plays a role in the nonclassical performance of NDQC2 protocol. 
\begin{lemma}\label{NDQC2Prec}
The precision of the estimated quantity $\iota$ in Eq.~\eqref{iota} is given by the amount of global coherence inherent within the input quantum state as quantified by REC.
\end{lemma}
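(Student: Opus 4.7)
The plan is to express the precision of the empirical estimator of $\iota$ directly in terms of the input control state, and then tie the coherence-dependent factor to the REC.

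First, for a general input $\hat{\varrho}_{\rm AB}^{\rm in} = \hat{\varrho}_{\rm AB}^{\rm cont}\otimes\hat{\tau}_{\rm A}\otimes\hat{\tau}_{\rm B}$, I would apply $\hat{U}_{\rm A}^{\rm cont}\otimes\hat{U}_{\rm B}^{\rm cont}$ and trace out the ancillae. Because each $\hat{\tau}_{\rm X}$ is maximally mixed and $\hat{U}_{\rm X}^{\rm cont}$ applies either $\hat{I}$ or $\hat{U}_{\rm X}$ conditioned on the control, the reduced output state on the control pair retains its diagonal while its $(11,00)$ off-diagonal acquires a multiplicative factor $\iota = t_{\rm A} t_{\rm B}$ with $t_{\rm X} = \mathrm{Tr}\hat{U}_{\rm X}/2^{n_{\rm X}}$.

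Next, since the correlator $\hat{M} = (\hat{\bm{\sigma}}_{x;\rm A}+\bm{i}\hat{\bm{\sigma}}_{y;\rm A}) \otimes (\hat{\bm{\sigma}}_{x;\rm B}+\bm{i}\hat{\bm{\sigma}}_{y;\rm B})$ equals $4|00\rangle_{\rm AB}\langle 11|_{\rm AB}$, the first two moments reduce to $\langle\hat{M}\rangle = 4(\hat{\varrho}_{\rm AB}^{\rm cont})_{11,00}\,\iota$ and $\langle\hat{M}^\dag\hat{M}\rangle = 16(\hat{\varrho}_{\rm AB}^{\rm cont})_{11,11}$. The precision of the empirical mean from $N$ shots,
\begin{equation*}
p_N = \frac{\sqrt{N}\,|\langle\hat{M}\rangle|}{\sqrt{\langle\hat{M}^\dag\hat{M}\rangle - |\langle\hat{M}\rangle|^2}},
\end{equation*}
is thus controlled entirely by the $(11,00)$ off-diagonal coherence and the $(11)$ diagonal weight of the input control state.

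Finally, I would tie these quantities to REC. Since $\mathfrak{C}_{\rm r}(\hat{\varrho}\otimes\hat{\tau}) = \mathfrak{C}_{\rm r}(\hat{\varrho})$ when $\hat{\tau}$ is maximally mixed, the REC of the full input reduces to that of $\hat{\varrho}_{\rm AB}^{\rm cont}$. The Cauchy--Schwarz bound $|(\hat{\varrho})_{11,00}|^2 \leq (\hat{\varrho})_{11,11}(\hat{\varrho})_{00,00}$ upper bounds the signal, and is saturated on the Task~2 input $\hat{\varrho}_{\rm AB;2}^{\rm cont}$, which also minimises $\mathtt{S}(\hat{\varrho})$ at its fixed maximally mixed diagonal and thereby maximises REC at $\log 2$. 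For the one-parameter family $\alpha\hat{\varrho}_{\rm AB;2}^{\rm cont}+(1-\alpha)\hat{\mathbbmss{I}}/4$, both the signal $|\langle\hat{M}\rangle|\propto\alpha$ and the REC $\tfrac{1+\alpha}{2}\log(1+\alpha)+\tfrac{1-\alpha}{2}\log(1-\alpha)$ are strictly monotone in $\alpha$, yielding the sought explicit link between $p_N$ and $\mathfrak{C}_{\rm r}$.

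The main obstacle is lifting this correspondence from the depolarised family to all admissible NDQC2 inputs. It calls for an extremal characterisation showing that, at fixed REC, the $(11,00)$ off-diagonal is maximised on the above family; I would attempt this by combining concavity of the von Neumann entropy at fixed diagonal with the Cauchy--Schwarz bound on off-diagonals.
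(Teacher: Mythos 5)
Your route is genuinely different from the paper's, and the difference sits exactly where your argument is incomplete. You analyse the \emph{direct} estimator: its signal is $4\,|(\hat{\varrho}^{\rm cont}_{\rm AB})_{11,00}|\,|\iota|$, so the shot-noise precision you compute is controlled by the magnitude of one specific off-diagonal entry of the control state (an $\ell_1$-type coherence quantity) together with a diagonal weight. The lemma, however, asserts that the precision is given by the \emph{relative entropy} of coherence, and these two quantities are not monotone functions of one another. Your closing paragraph concedes this, and the proposed fix --- an extremal characterisation of the $(11,00)$ entry at fixed REC --- cannot close the gap: at best it yields a one-sided bound along a special one-parameter family, and there are control states with strictly positive global REC whose $(11,00)$ entry vanishes (e.g.\ $\ketbra{+}_{\rm A}\otimes\hat{\mathbbmss{I}}_{\rm B}/2$, or states whose coherence lives entirely in the $(01,10)$ entry), for which your direct estimator has zero signal. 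So a moment calculation alone cannot deliver ``precision $=$ function of REC''.

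The missing idea, which is the entire content of the paper's proof, is the \emph{operational} characterisation of REC as the asymptotic distillable coherence (Winter--Yang), imported via Matera \emph{et al.}: from $M$ copies of $\hat{\varrho}^{\rm cont}_{\rm AB}$ one first distils, by incoherent operations, $N\approx M\,\mathfrak{C}_{\rm r}(\hat{\varrho}^{\rm cont}_{\rm AB})$ maximally coherent control pairs, and only then runs the circuit on these unit-signal probes. This preprocessing is precisely what converts ``off-diagonal magnitude'' into ``REC'': it gives $\textsc{SE}(\iota_{\rm X})\approx\sqrt{(2-|\iota_{\rm X}|^2)/(M\,\mathfrak{C}_{\rm r})}$, hence $\textsc{SE}(\iota)\approx\sqrt{(4-|\iota_{\rm A}|^2-|\iota_{\rm B}|^2)/(M\,\mathfrak{C}_{\rm r})}$ and a binary precision $\approx\tfrac12\log_2\mathfrak{C}_{\rm r}(\hat{\varrho}^{\rm cont}_{\rm AB})$. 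If you want to salvage your approach, you must either insert this distillation step (in which case your variance computation becomes a calculation on maximally coherent inputs only), or weaken the claim to ``precision is governed by the relevant off-diagonal coherence'', which is a different lemma from the one stated.
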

\noindent Please see Appendix~\ref{App:1} for the proof. 
We now make use the fact that local coherence of marginal states implies the global coherence of the joint state of the system, in combination with Theorem~\ref{GC:th:QConNecQComp} and the above lemma, to conclude the fourth main result of the present paper.
\begin{theorem}\label{NDQC2resource}
Global coherence is necessary and sufficient for the nonclassical performance of the NDQC2 protocol.
\end{theorem}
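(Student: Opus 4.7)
The plan is to prove the two directions of the iff independently. Sufficiency is essentially immediate from Lemma~\ref{NDQC2Prec}, so I would dispose of it first and then spend the bulk of the argument on necessity, where I combine Theorem~\ref{GC:th:QConNecQComp} with a basis-dependent observation about where coherence can live in NDQC2.

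For \textbf{sufficiency}, Lemma~\ref{NDQC2Prec} already states that the precision of Charlie's estimate of $\iota$ is quantitatively controlled by the REC of the input with respect to the global basis $\mathcal{E}_{\rm AB}$. I would simply remark that since $\mathcal{E}_{\rm AB}$ is an eigenbasis of $\hat{U}^{\rm cont}_{\rm AB}=\hat{U}^{\rm cont}_{\rm A}\otimes\hat{U}^{\rm cont}_{\rm B}$, the server operations preserve REC, so global coherence present at the input is exactly the resource consumed by the correlation measurement at the output. Thus any strictly positive amount of global coherence in the input translates into a nonclassical precision advantage, establishing sufficiency.

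For \textbf{necessity}, I would argue by contrapositive: assume the input $\hat{\varrho}^{\rm in}_{\rm AB}$ has zero global REC with respect to $\mathcal{E}_{\rm AB}$, i.e.\ it is an element of $\set\s{inc}$ on the joint system. Using the fact recalled just before the theorem statement, that local coherence of either marginal implies global coherence of the joint state, the contrapositive gives that both reduced states $\mathrm{Tr}_{\rm B}\hat{\varrho}^{\rm in}_{\rm AB}$ and $\mathrm{Tr}_{\rm A}\hat{\varrho}^{\rm in}_{\rm AB}$ are also locally incoherent. Since $\hat{U}^{\rm cont}_{\rm AB}$ is diagonal in $\mathcal{E}_{\rm AB}$, it cannot generate any coherence from such an input, and the destructive Pauli measurements followed by Charlie's classical combination of outcomes constitute a classical post-processing. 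Hence the whole NDQC2 process, viewed as a map from classical inputs to classical outputs, reduces to an efficient composition of USI operations. By Theorem~\ref{GC:th:QConNecQComp} (equivalently, by the equivalence of Criteria~\ref{GC:crit:classicality} and~\ref{GC:crit:CohNoncl}), such a process admits an efficient classical simulation and therefore cannot deliver the nonclassical advantage that NDQC2 exhibits. Contradiction, so global coherence is necessary.

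The step I expect to require the most care is the necessity direction, and specifically the claim that a globally incoherent input forecloses \emph{every} route to a quantum advantage, not merely those visible at the joint level. The resolution is exactly the contrapositive of ``local coherence of marginals implies global coherence of the joint state'': in the globally incoherent case there is no hidden pocket of local coherence at either server either, and because $\mathcal{E}_{\rm AB}$ diagonalizes $\hat{U}^{\rm cont}_{\rm AB}$ the protocol cannot manufacture coherence from the maximally mixed ancillas. Once this is in place, invoking Theorem~\ref{GC:th:QConNecQComp} closes the argument and, together with the sufficiency half, yields the stated equivalence.
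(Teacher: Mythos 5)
Your proposal is correct and follows essentially the same route as the paper: sufficiency from Lemma~\ref{NDQC2Prec} (precision of the estimate of $\iota$ controlled by the input REC, preserved by $\hat{U}^{\rm cont}_{\rm AB}$), and necessity from Theorem~\ref{GC:th:QConNecQComp} combined with the fact that local coherence of the marginals implies global coherence of the joint state, so a globally incoherent input leaves the whole protocol classically simulatable. Your write-up is in fact somewhat more explicit than the paper's one-line justification.
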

\noindent Theorem~\ref{NDQC2resource}, clearly shows the role of global coherence in our protocol.
There is, however, a difference between NDQC2 of Task~1 and Task~2 to be discussed shortly.


\section{Net Global Coherence as Quantum Correlations}

In the Task~2, we conclude the quantumness of the correlations from nonclassical performance of the protocol, since, (i) the input state to the protocol is indeed correlated; our considerations are merely regarding whether they are quantum or classical, and,
(ii) other than observing correlations between Alice and Bob outcomes, Charlie would not be able to obtain the result of the task.
In Task~1, on the other hand, such a conclusion is not valid.
The obvious reason is that, the product state used in Task~1 represents independent preparation procedures and hence, by postulates of quantum mechanics, uncorrelated states.

To address this difference, suppose that a measure of coherence $\mathfrak{C}$ has been chosen to characterize quantum correlations in some computational basis.
We then define
\begin{equation}\label{Cnet1}
\mathfrak{C}^{\rm net}(\hat{\varrho}_{\rm AB}) = \mathfrak{C} (\hat{\varrho}_{\rm AB}) - \mathfrak{C}(\hat{\varrho}_{\rm A}) - \mathfrak{C}(\hat{\varrho}_{\rm B}),
\end{equation}
to determine the {\it net-global} quantum computational-power of the quantum states with the interpretation that we subtract the local quantum powers from the overall one.
The requirement that product states show no quantum correlations, and hence no global computational power except those due to local resources, imposes the condition ``if $\hat{\varrho}_{\rm AB}=\hat{\varrho}_{\rm A}{\otimes}\hat{\varrho}_{\rm B}$ then $\mathfrak{C}^{\rm net}(\hat{\varrho}_{\rm AB}) {=} 0$''.
This holds true if and only if the coherence measure $\mathfrak{C}$ is additive, i.e., $\mathfrak{C} (\hat{\varrho}_{\rm A}{\otimes}\hat{\varrho}_{\rm B}){=}\mathfrak{C} (\hat{\varrho}_{\rm A}){+}\mathfrak{C}(\hat{\varrho}_{\rm B})$.
Importantly, the relative entropy of coherence~\cite{Baumgratz2014,Winter2016} is an additive measure, while, for instance, the $\ell_1$-norm of coherence~\cite{Baumgratz2014} is not.
It immediately follows that in Task~1 $\mathfrak{C}_{\rm r}^{\rm net}(\hat{\varrho}_{{\rm AB};1}^{\rm in}){=}0$, that is, all the global quantum computational power is due to the local resources.
In sharp contrast, in Task~2 $\mathfrak{C}_{\rm r}^{\rm net}(\hat{\varrho}_{{\rm AB};2}^{\rm in}){=} \log{2}$, interpreted as the amount of global quantum computational power purely due to quantum correlations.
We thus notice that in Task~2 all the computational advantage can be associated with the net global coherence.
We draw inspiration from this fact and, with a little foresight, define the quantum correlated states as per below.

\begin{definition}
(Quantum-Correlated States)
A bipartite quantum state $\hat{\varrho}_{\rm AB}$ is said to contain quantum correlations with respect to a global computational basis $\mathcal{E}_{\rm AB}$ if and only if $\mathfrak{C}_{\rm r}^{\rm net}(\hat{\varrho}_{\rm AB}){>}0$ within $\mathcal{E}_{\rm AB}$.
\end{definition}
\noindent In what follows, we show that this definition is indeed well justified, demonstrating that previously known classes of quantum correlations are emergent from our extended notion, and that it allows for a proper operational interpretation.


\subsection{Some Properties of $\mathfrak{C}_{\rm r}^{\rm net}$}

Using relative entropy of coherence, as shown in Appendix~\ref{DEQ4}, we can rewrite Eq.~\eqref{Cnet1} as
\begin{equation}\label{Cnet2}
\mathfrak{C}_{\rm r}^{\rm net}(\hat{\varrho}_{\rm AB})=\mathtt{I}(\hat{\varrho}_{\rm AB})-\mathtt{I}(\Delta_{\rm AB}[\hat{\varrho}_{\rm AB}]),
\end{equation}
in which $\mathtt{I}(\hat{\varrho}_{\rm AB}){:=}\mathtt{S}(\hat{\varrho}_{\rm A}){+}\mathtt{S}(\hat{\varrho}_{\rm B}){-}\mathtt{S}(\hat{\varrho}_{\rm AB})$ is the mutual information and $\Delta_{\rm AB}$ is the bipartite fully dephasing channel within the global computational basis.
$\mathfrak{C}_{\rm r}^{\rm net}$ is a further generalization of the \emph{basis-dependent} discord in which only one of the parties undergoes the dephasing~\cite{Yadin2016}.
According to Definition~\ref{GC:def:classicalObs} of a classical observer in coherence theory, the perception of two classical observers from the quantum world is limited to the globally incoherent state $\hat{\sigma}\s{AB}=\Delta_{\rm AB}[\hat{\varrho}_{\rm AB}]$.
Hence, the quantity $\mathtt{I}(\Delta_{\rm AB}[\hat{\varrho}_{\rm AB}])$ can be considered as the mutual information between these two classical observers with particular local bases $\E\s{A}$ and $\E\s{B}$.
It then follows that $\mathfrak{C}_{\rm r}^{\rm net}(\hat{\varrho}_{\rm AB})$ represents the net quantum information shared between the two in the global basis $\mathcal{E}_{\rm AB}$.
It is also worth pointing out that a quantity called {\it global quantum discord} was previously introduced by Rulli and Sarandy in Ref.~\cite{Rulli2011}. 
The important difference between this quantity and net global coherence, however, is that a minimization over all computational bases is involved in the definition of the former, in tradition of discord quantities.
Here, in contrast, we showed, within the framework of coherence theory, that the computational basis plays a significant role in the characterisation of quantum correlations.
In this regard, we obtain the following properties of the net global coherence.

\begin{theorem}\label{th:positNetCoh}
For every bipartite quantum state $\hat{\varrho}_{\rm AB}$ it holds that $\mathfrak{C}^{\rm net}_{\rm r}(\hat{\varrho}_{\rm AB})\geqslant 0$.
The equality holds if and only if the quantum state is a product state or has the form $\hat{\varrho}_{AB}=\sum_{ij} p_{ij} \ketbrax{i}{A}\otimes\ketbrax{j}{B}$ with respect to the global computational basis $\mathcal{E}_{\rm AB }{=}\mathcal{E}_{\rm A}{\otimes}\mathcal{E}_{\rm B}{=}\{|i\rangle_{\rm A}\otimes|j\rangle_{\rm B}\}$, with $\{p_{ij}\}$ being a probability distribution.
\end{theorem}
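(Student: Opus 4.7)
The plan is to build directly on the mutual-information rewriting in Eq.~\eqref{Cnet2}, which already expresses the net coherence as the gap $\mathfrak{C}^{\rm net}_{\rm r}(\hat{\varrho}_{\rm AB})=\mathtt{I}(\hat{\varrho}_{\rm AB})-\mathtt{I}(\Delta_{\rm AB}[\hat{\varrho}_{\rm AB}])$ produced by the bipartite dephasing. Writing $\mathtt{I}(\hat{\varrho}_{\rm AB})=S(\hat{\varrho}_{\rm AB}\|\hat{\varrho}_{\rm A}\otimes\hat{\varrho}_{\rm B})$, and noting that the marginals of $\Delta_{\rm AB}[\hat{\varrho}_{\rm AB}]=(\Delta_{\rm A}\otimes\Delta_{\rm B})[\hat{\varrho}_{\rm AB}]$ are exactly $\Delta_{\rm A}[\hat{\varrho}_{\rm A}]$ and $\Delta_{\rm B}[\hat{\varrho}_{\rm B}]$, the quantity $\mathfrak{C}^{\rm net}_{\rm r}$ becomes the monotonicity gap of the quantum relative entropy under the local CPTP map $\Delta_{\rm A}\otimes\Delta_{\rm B}$ applied to the pair $(\hat{\varrho}_{\rm AB},\hat{\varrho}_{\rm A}\otimes\hat{\varrho}_{\rm B})$. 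Non-negativity is then immediate from the data-processing inequality.

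The ``if'' direction of the equality statement is a direct verification. For a product state $\hat{\varrho}_{\rm A}\otimes\hat{\varrho}_{\rm B}$, additivity of the relative entropy of coherence on product states makes the three $\mathfrak{C}_{\rm r}$ terms in Eq.~\eqref{Cnet1} cancel. For $\hat{\varrho}_{\rm AB}=\sum_{ij}p_{ij}\ketbrax{i}{A}\otimes\ketbrax{j}{B}$, both the joint state and each marginal are diagonal in the relevant computational basis, so all three coherence terms vanish separately.

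The substantive part is the converse. My plan is to factorise $\Delta_{\rm AB}=(\Delta_{\rm A}\otimes\Ident_{\rm B})\circ(\Ident_{\rm A}\otimes\Delta_{\rm B})$ and decompose the total mutual-information drop as a sum of two non-negative DPI drops, one per dephasing step. Vanishing of $\mathfrak{C}^{\rm net}_{\rm r}$ forces both to vanish. A drop on, say, the A side vanishes precisely when the A basis-dependent quantum discord vanishes, and the known characterisation (following~\cite{Yadin2016} or, equivalently, the Petz equality case of DPI for the partial-trace channel with reference $\Delta_{\rm A}[\hat{\varrho}_{\rm AB}]$) forces $\hat{\varrho}_{\rm AB}$ to be either A-classical, $\sum_i p_i\ketbrax{i}{A}\otimes\hat{\sigma}_{{\rm B}|i}$, or a product $\hat{\varrho}_{\rm A}\otimes\hat{\varrho}_{\rm B}$. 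Applying the analogous B-side condition to each branch then collapses the A-classical branch to the fully classical form (by forcing every $\hat{\sigma}_{{\rm B}|i}$ to be diagonal in the B basis) while leaving the product branch unchanged. The main obstacle is therefore this characterisation of vanishing one-sided basis-dependent discord; all the remaining work is routine case analysis.
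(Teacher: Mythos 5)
Your non-negativity argument and your verification of the ``if'' direction are both sound; the one-shot data-processing argument for $\mathtt{S}(\hat{\varrho}_{\rm AB}\|\hat{\varrho}_{\rm A}\otimes\hat{\varrho}_{\rm B})$ under $\Delta_{\rm A}\otimes\Delta_{\rm B}$ is in fact a slightly cleaner route to $\mathtt{I}(\hat{\varrho}_{\rm AB})\geqslant\mathtt{I}(\Delta_{\rm AB}[\hat{\varrho}_{\rm AB}])$ than the two-link chain through one-sided dephasings used in the paper, though both rest on the same monotonicity. The genuine gap is in the converse, and it sits exactly in the lemma you defer to as ``the known characterisation'': the claim that a vanishing one-sided basis-dependent discord forces the state to be either classical on that side or a product. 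That claim is false once the dephased subsystem has dimension at least three, and the paper's own proof founders on the same step (it asserts that $\mathtt{D}_{{\rm A}\to{\rm B}}(\Delta_{\rm B}[\hat{\varrho}_{\rm AB}])=0$ implies the fully classical form).

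Concretely, let $\mathcal{E}_{\rm A}{=}\{|1\rangle,|2\rangle,|3\rangle,|4\rangle\}$, $\mathcal{E}_{\rm B}{=}\{|0\rangle,|1\rangle\}$, and consider
\begin{equation*}
\hat{\varrho}_{\rm AB}=\tfrac{1}{2}\,\ketbrax{\psi_{12}}{A}\otimes\ketbrax{0}{B}+\tfrac{1}{2}\,\ketbrax{\psi_{34}}{A}\otimes\ketbrax{1}{B},
\end{equation*}
with $|\psi_{12}\rangle{=}(|1\rangle{+}|2\rangle)/\sqrt{2}$ and $|\psi_{34}\rangle{=}(|3\rangle{+}|4\rangle)/\sqrt{2}$. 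A direct computation gives $\mathfrak{C}_{\rm r}(\hat{\varrho}_{\rm AB})=\log 2$, $\mathfrak{C}_{\rm r}(\hat{\varrho}_{\rm A})=\log 2$, $\mathfrak{C}_{\rm r}(\hat{\varrho}_{\rm B})=0$, hence $\mathfrak{C}^{\rm net}_{\rm r}(\hat{\varrho}_{\rm AB})=0$ (equivalently $\mathtt{I}(\hat{\varrho}_{\rm AB})=\mathtt{I}(\Delta_{\rm AB}[\hat{\varrho}_{\rm AB}])=\log 2$). Yet this state is correlated, so it is not a product, and it carries coherence on A, so it is not of the form $\sum_{ij}p_{ij}\ketbrax{i}{A}\otimes\ketbrax{j}{B}$ in the given basis. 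Tracing through your decomposition: $\Delta_{\rm B}$ acts trivially, the A-side drop vanishes, but the state is neither A-classical nor a product. The Petz equality analysis of the dephasing channel admits a third family that your dichotomy omits, namely block states $\sum_\alpha q_\alpha\,\hat{\sigma}^{(\alpha)}_{\rm A}\otimes\hat{\tau}^{(\alpha)}_{\rm B}$ whose $\hat{\sigma}^{(\alpha)}_{\rm A}$ are supported on disjoint subsets of $\mathcal{E}_{\rm A}$: coherence confined within a block is invisible to the correlation loss caused by dephasing. Consequently your case analysis cannot be closed as written --- and indeed no argument can, because the example shows the equality class in the theorem statement is itself too small and must be enlarged to include such block-coherent separable states (only when both subsystems are qubits, where the blocks are necessarily singletons or the whole space, does the ``product or classical-classical'' dichotomy survive).
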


\noindent Please see Appendix~\ref{P1} for a detailed proof.
It is necessary that every extension of the standard classes of quantum correlated states includes the hierarchy of 
entangled and discordant states as special cases.
To show that indeed this holds true for our approach in a well-defined way, we first state a feature of $\mathfrak{C}^{\rm net}_{\rm r}$, the proof of which is given in Appendix~\ref{App:5}.

\begin{theorem}\label{th:NetCohPure}
Any multipartite pure state has a nonzero net global coherence if and only if it is entangled.
\end{theorem}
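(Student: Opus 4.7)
The strategy is to reduce the statement to Theorem~\ref{th:positNetCoh} by exploiting the observation that a pure state of classical--classical (CC) form is necessarily a product.

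First I would dispose of the easy direction. If $|\psi\rangle$ is fully separable, $|\psi\rangle = \bigotimes_k |\phi_k\rangle_{A_k}$, the additivity of the relative entropy of coherence on tensor products (which follows because both the fully dephasing map $\Delta$ and the von Neumann entropy distribute over tensor products) yields $\mathfrak{C}_{\rm r}(|\psi\rangle\langle\psi|) = \sum_k \mathfrak{C}_{\rm r}(\rho_{A_k})$, and hence $\mathfrak{C}^{\rm net}_{\rm r}(|\psi\rangle\langle\psi|) = 0$.

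For the converse in the bipartite case, suppose $|\psi\rangle_{\rm AB}$ is pure with vanishing net coherence. Theorem~\ref{th:positNetCoh} then forces the state to be either a product or of the CC form $\sum_{ij} p_{ij}\, |i\rangle_{\rm A}\langle i| \otimes |j\rangle_{\rm B}\langle j|$. The latter is a convex combination of pairwise orthogonal pure product states, so its rank equals the cardinality of $\{(i,j)\colon p_{ij}>0\}$; purity forces this set to be a singleton, which again gives a product. Hence, for pure bipartite states, zero net coherence implies separability, which is the contrapositive of the claim.

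For a general multipartite pure state I would invoke the chain-rule identity
\[
\mathfrak{C}^{\rm net}_{\rm r}(\rho_{A_1\cdots A_n}) = \mathfrak{C}^{\rm net}_{\rm r}(\rho_{A_1|A_2\cdots A_n}) + \mathfrak{C}^{\rm net}_{\rm r}(\rho_{A_2\cdots A_n}),
\]
which follows immediately from the definition by adding and subtracting $\mathfrak{C}_{\rm r}(\rho_{A_2\cdots A_n})$. Iterating this identity decomposes $\mathfrak{C}^{\rm net}_{\rm r}(|\psi\rangle\langle\psi|)$ into a sum of bipartite net coherences, each non-negative by Theorem~\ref{th:positNetCoh}. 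If $|\psi\rangle$ is entangled it fails to be fully separable, and the familiar fact that a pure state biseparable across every bipartition is fully separable supplies at least one bipartition $A|\bar A$ across which $|\psi\rangle$ has Schmidt rank strictly greater than one. The bipartite argument above renders the corresponding summand strictly positive, so $\mathfrak{C}^{\rm net}_{\rm r}(|\psi\rangle\langle\psi|) > 0$.

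The only step of real substance beyond Theorem~\ref{th:positNetCoh} is the purity-forces-product observation in the CC case; everything else is either additivity of REC or a routine chain-rule manipulation. The mildly delicate point in writing this up is verifying that the iterated chain rule genuinely extracts a bipartition witnessing entanglement when $|\psi\rangle$ is not fully separable, which is what the biseparability-implies-full-separability fact provides.
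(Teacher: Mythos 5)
Your proof is correct and follows essentially the same route as the paper's: the forward direction via additivity of the relative entropy of coherence on product states, and the converse via the equality conditions of Theorem~\ref{th:positNetCoh} together with the observation that a pure CC state must in fact be a product. Your chain-rule reduction of the multipartite case to a sequence of bipartite cuts (noting that the reduced state stays pure up to the first non-product cut) is spelled out more carefully than in the paper, which asserts the multipartite claim without elaboration.
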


Second, we give some more general results on the global-coherence properties of the standard classification.
As it has been shown in Appendix~\ref{P3}, the following holds.

\begin{theorem}\label{th:nonDiscord}
A bipartite quantum state $\hat{\varrho}_{\rm AB}$ is nondiscordant if and only if $\mathfrak{C}_{\rm r}^{\rm net}(\hat{\varrho}_{\rm AB}){=}0$ within some appropriate global computational basis $\mathcal{E}^\star_{\rm AB}$.
\end{theorem}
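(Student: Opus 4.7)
The plan is to reduce Theorem~\ref{th:nonDiscord} to Theorem~\ref{th:positNetCoh}, which already characterises exactly when $\mathfrak{C}^{\rm net}_{\rm r}(\hat{\varrho}_{\rm AB})$ vanishes: either the state is a product state, or it has the strictly classical-classical form $\sum_{ij} p_{ij}\ketbrax{i}{A}\otimes\ketbrax{j}{B}$ in the chosen global computational basis $\mathcal{E}_{\rm AB}=\mathcal{E}_{\rm A}\otimes\mathcal{E}_{\rm B}$. The whole work is then to match these two cases with the definition of a nondiscordant state given in item (iv) of Section~\ref{CQI}.

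For the forward direction, I would start with a fully nondiscordant $\hat{\varrho}_{\rm AB}=\sum_{ij} p_{ij}\ketbrax{i}{A}\otimes\ketbrax{j}{B}$, where by definition $\{\ket{i}\s{A}\}$ and $\{\ket{j}\s{B}\}$ are local orthonormal sets. I then simply declare the appropriate local computational bases to be $\mathcal{E}^\star_{\rm A}=\{\ket{i}\s{A}\}$ and $\mathcal{E}^\star_{\rm B}=\{\ket{j}\s{B}\}$, completing them to full orthonormal bases of the respective local Hilbert spaces if necessary, and take $\mathcal{E}^\star_{\rm AB}=\mathcal{E}^\star_{\rm A}\otimes\mathcal{E}^\star_{\rm B}$. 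In this basis $\hat{\varrho}_{\rm AB}$ is literally in the form appearing in Theorem~\ref{th:positNetCoh}, so $\mathfrak{C}_{\rm r}^{\rm net}(\hat{\varrho}_{\rm AB})=0$.

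For the converse, I assume $\mathfrak{C}_{\rm r}^{\rm net}(\hat{\varrho}_{\rm AB})=0$ in some $\mathcal{E}^\star_{\rm AB}=\mathcal{E}^\star_{\rm A}\otimes\mathcal{E}^\star_{\rm B}$ and apply Theorem~\ref{th:positNetCoh} to split into two cases. If $\hat{\varrho}_{\rm AB}$ already has the classical-classical form in $\mathcal{E}^\star_{\rm AB}$, it is nondiscordant essentially by definition. If instead $\hat{\varrho}_{\rm AB}=\hat{\varrho}_{\rm A}\otimes\hat{\varrho}_{\rm B}$ is a product state, I invoke the spectral decompositions $\hat{\varrho}_{\rm X}=\sum_{k} \lambda^{\rm X}_k\ketbrax{\varphi^{\rm X}_k}{X}$ for ${\rm X}={\rm A},{\rm B}$, giving
\begin{equation}
\hat{\varrho}_{\rm AB}=\sum_{kl}\lambda^{\rm A}_k\lambda^{\rm B}_l\,\ketbrax{\varphi^{\rm A}_k}{A}\otimes\ketbrax{\varphi^{\rm B}_l}{B},
\end{equation}
which is of the form (iv) of Section~\ref{CQI} with probabilities $p_{kl}=\lambda^{\rm A}_k\lambda^{\rm B}_l$, and hence nondiscordant. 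Note that the local eigenbases $\{\ket{\varphi^{\rm A}_k}\s{A}\}$ and $\{\ket{\varphi^{\rm B}_l}\s{B}\}$ need not coincide with $\mathcal{E}^\star_{\rm A}$ and $\mathcal{E}^\star_{\rm B}$; this is consistent with the statement of the theorem, which only asserts existence of \emph{some} appropriate global basis witnessing $\mathfrak{C}_{\rm r}^{\rm net}=0$.

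The main subtlety I expect is precisely this product-state case: one must be careful not to conflate the basis $\mathcal{E}^\star_{\rm AB}$ in which the net global coherence vanishes with the basis in which nondiscordance is made manifest. These can genuinely differ for a product state (e.g.\ for $\hat{\varrho}_{\rm A}\otimes\hat{\varrho}_{\rm B}$ with non-diagonal marginals in $\mathcal{E}^\star_{\rm AB}$), and the argument must explicitly use the freedom of choosing a \emph{different} basis via the spectral decomposition to reconcile the two characterisations. Once that is cleanly handled, the ``if and only if'' follows at once from Theorem~\ref{th:positNetCoh}.
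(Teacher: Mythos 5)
Your proof is correct, and it takes a genuinely different (and more economical) route than the paper's. You treat Theorem~\ref{th:nonDiscord} as a pure corollary of the equality case of Theorem~\ref{th:positNetCoh}: forward direction by exhibiting the basis in which the form~(iv) state is manifestly of the type appearing there, converse by splitting into the ``product'' and ``CC-in-basis'' branches and disposing of the product branch via the spectral decompositions of the marginals. The paper instead re-runs the entropic machinery from scratch: it first proves an auxiliary lemma identifying CC states with states that are globally incoherent in some basis, then, for the ``if'' direction, uses $\mathfrak{C}_{\rm r}^{\rm net}=0\Rightarrow\mathtt{I}(\hat{\varrho}_{\rm AB})=\mathtt{I}(\Delta_{\rm AB}[\hat{\varrho}_{\rm AB}])$ together with the sandwich $\mathtt{I}(\hat{\varrho}_{\rm AB})\geqslant\mathtt{I}(\Delta_{\rm A}[\hat{\varrho}_{\rm AB}])\geqslant\mathtt{I}(\Delta_{\rm AB}[\hat{\varrho}_{\rm AB}])$ to force both basis-dependent discords to vanish, and then passes to the \emph{standard} discord (as a minimum over bases) and invokes the known fact that zero two-way discord characterises CC states. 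Your reduction buys brevity and avoids the detour through standard discord; it also makes explicit the product-state branch that the paper's argument absorbs silently into ``zero discord iff CC.'' What the paper's version buys is self-containedness at this point of the appendix (it does not lean on the equality characterisation of Theorem~\ref{th:positNetCoh}, whose two-branch statement is itself argued somewhat loosely there) and a direct link to the basis-dependent discord of Yadin \emph{et al}. Your explicit handling of the mismatch between the basis $\mathcal{E}^\star_{\rm AB}$ witnessing $\mathfrak{C}_{\rm r}^{\rm net}=0$ and the eigenbasis witnessing nondiscordance is a genuine clarification that the paper's proof does not spell out.
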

 
\noindent The following is then an immediate conclusion.

\begin{corol}\label{BasInDepNetGlobCoh}
\cite{Girolami2013}
For every quantum state that is characterized as quantum correlated in quantum information theory (entangled and discordant states), $\mathfrak{C}_{\rm r}^{\rm net}(\hat{\varrho}_{\rm AB}){>}0$ independent of the chosen (pure product) global-basis.
\end{corol}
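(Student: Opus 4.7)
The plan is to obtain Corollary~\ref{BasInDepNetGlobCoh} as the logical contrapositive of Theorem~\ref{th:nonDiscord}, combined with the elementary fact that every entangled state is discordant. Concretely, Theorem~\ref{th:nonDiscord} is an ``$\exists$''-statement in disguise: $\hat{\varrho}_{\rm AB}$ is nondiscordant if and only if \emph{there exists} a pure product global basis $\mathcal{E}^\star_{\rm AB}{=}\mathcal{E}_{\rm A}{\otimes}\mathcal{E}_{\rm B}$ in which $\mathfrak{C}_{\rm r}^{\rm net}(\hat{\varrho}_{\rm AB}){=}0$. The basis is automatically a pure product basis because the global computational basis in the framework of Sec.~II~B is, by construction, a tensor product of local orthonormal ones. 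Negating both sides of the equivalence and using Theorem~\ref{th:positNetCoh} to replace ``not $=0$'' by ``$>0$'' gives: $\hat{\varrho}_{\rm AB}$ is discordant if and only if $\mathfrak{C}_{\rm r}^{\rm net}(\hat{\varrho}_{\rm AB}){>}0$ in every pure product global basis. This immediately delivers the corollary for the discordant case.

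For the entangled case I would invoke the standard inclusion that entangled states are discordant. The argument is a two-liner: if $\hat{\varrho}_{\rm AB}$ were nondiscordant then it would admit the form $\sum_{ij}p_{ij}|i\rangle_{\rm A}\langle i|\otimes |j\rangle_{\rm B}\langle j|$ (class (iv) in the preliminaries), which is manifestly separable; contraposing, entangled implies discordant. The positivity statement therefore inherits from the discordant case to the entangled case, completing the proof.

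I anticipate no genuine obstacle, since the hard work has already been absorbed into Theorems~\ref{th:positNetCoh} and~\ref{th:nonDiscord}. The only small bookkeeping items are (i) making the implicit existential quantifier in Theorem~\ref{th:nonDiscord} explicit before negating it, and (ii) verifying that ``pure product global basis'' in the corollary matches the definition of global computational basis used throughout, so that the contrapositive reads cleanly as ``for every pure product global basis, $\mathfrak{C}_{\rm r}^{\rm net}{>}0$.'' Both are straightforward, so I would present the entire proof as a single short paragraph rather than stage it as a multi-step argument.
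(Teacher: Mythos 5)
Your proposal is correct and matches the paper's intended argument: the paper offers no explicit proof, presenting the corollary as ``an immediate conclusion'' of Theorem~\ref{th:nonDiscord}, and your contrapositive of that theorem (negating the existential quantifier over product computational bases), combined with the nonnegativity from Theorem~\ref{th:positNetCoh} and the standard inclusion ``entangled $\Rightarrow$ discordant,'' is exactly that deduction spelled out. The two bookkeeping points you flag are handled correctly, so nothing is missing.
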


\noindent From Corollary~\ref{BasInDepNetGlobCoh} we see that the standard picture of quantum correlations follows from our formalism if a net global coherence is required to exist with respect to every computational basis.
The latter is indeed a very strong condition.
In particular, because classical observers are restricted to a specific computational basis, having global coherence with respect to that basis is sufficient for obtaining possible quantum advantage of the correlations, provided that appropriate fine-grained operations at a quantum level are accessible.

Arguably, one of the desirable properties of a theory of quantum correlations as a resource, which is not met by quantum discord, is convexity~\cite{Matera2016}.
Important to our construction is that there exist convex combinations of two product states that possess quantum correlations with a positive $\mathfrak{C}_{{\rm r}}^{\rm net}(\hat{\varrho}_{\rm AB})$.
This shows that convex combinations of uncorrelated states are \emph{not} necessarily uncorrelated.
In other words, the two desirable properties of ``a convex resource theory of quantum correlations'' and ``the preparation independence of the product states'' seem incompatible, unless considering a theory of nonclassicality which identifies products of coherent states as nonclassical~\cite{Sperling2018}.

As a final word, quantum-correlated states provide an interesting nonlocal feature which we call \emph{coherence localization}., namely, the process of providing local computational power for one party with the aid of another.
The trivial case is that there already exists some local coherence available to parties. 
However, there exists a nontrivial scenario that is given below and proved in Appendix~\ref{App:7}.

\begin{theorem}
\cite{Streltsov2017}
Given the local computational bases $\mathcal{E}_{\rm A}{=}\{|i\rangle_{\rm A}\}$ and $\mathcal{E}_{\rm B}{=}\{|j\rangle_{\rm B}\}$, and Alice and Bob sharing a bipartite quantum state $\hat{\varrho}_{\rm AB}$, they can distil quantum coherence on Bob's side using LICC if and only if $\hat{\varrho}_{\rm AB}$ cannot be written as $\hat{\varrho}_{\rm AB}{=}\sum_{j} p_{j} \hat{\varrho}_{{\rm A};j}{\otimes}|j\rangle_{\rm B}\langle j|$.
\end{theorem}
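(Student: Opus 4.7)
The proof decomposes naturally along the two directions of the biconditional.

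For the ``only if'' direction, the plan is to show that the class of quantum-incoherent-on-Bob's-side (QI-B) states, namely those of the form $\sum_{j}p_{j}\hat{\varrho}_{{\rm A};j}\otimes|j\rangle_{\rm B}\langle j|$, is invariant under every elementary LICC action. By induction on the rounds, it suffices to treat three cases. (i)~An incoherent Kraus branch $\hat{F}_{{\rm A},k}$ on Alice's side takes a QI-B state to $\sum_{j}p_{j}\hat{F}_{{\rm A},k}\hat{\varrho}_{{\rm A};j}\hat{F}_{{\rm A},k}^{\dagger}\otimes|j\rangle_{\rm B}\langle j|$, manifestly of the same form. (ii)~An incoherent Kraus branch $\hat{F}_{{\rm B},l}$ on Bob maps each basis projector $|j\rangle_{\rm B}\langle j|$ to a scalar multiple of another basis projector, because $\hat{F}_{{\rm B},l}$ sends pure incoherent states to pure incoherent states; grouping equal images restores the QI-B form. (iii)~Classical messages merely condition on a Kraus branch and so preserve the form of each sub-normalized state. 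At every branch of the protocol Bob's reduced state is therefore diagonal in $\mathcal{E}_{\rm B}$, and no coherence can ever appear on his side.

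For the ``if'' direction, the plan is to exhibit an explicit one-round LICC protocol. Expand $\hat{\varrho}_{\rm AB}=\sum_{jj'}\hat{M}_{jj'}\otimes|j\rangle_{\rm B}\langle j'|$ with operators $\hat{M}_{jj'}$ acting on Alice's Hilbert space. The assumption that $\hat{\varrho}_{\rm AB}$ is not QI-B is equivalent to $\hat{M}_{jj'}\neq 0$ for some $j\neq j'$, and by polarization there exists a unit vector $|\phi\rangle_{\rm A}$ with $\langle\phi|\hat{M}_{jj'}|\phi\rangle\neq 0$. Extend $|\phi\rangle_{\rm A}$ to an orthonormal basis $\{|\phi_{k}\rangle_{\rm A}\}$ with $|\phi_{0}\rangle_{\rm A}=|\phi\rangle_{\rm A}$, and have Alice implement the measurement with Kraus operators $\hat{F}_{k}=|0\rangle_{\rm A}\langle\phi_{k}|$. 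Each $\hat{F}_{k}$ is of the admissible form in Eq.~\eqref{GC:eq:incohKraus} by taking the single disjoint subset $\mathcal{E}_{0}{=}\mathcal{E}_{\rm A}$, and $\sum_{k}\hat{F}_{k}^{\dagger}\hat{F}_{k}=\hat{I}_{\rm A}$ by completeness. Alice then announces the outcome over the classical channel. Conditioning on $k{=}0$, which has nonzero probability, Bob's conditional state is proportional to $\langle\phi|\hat{\varrho}_{\rm AB}|\phi\rangle_{\rm A}=\sum_{jj'}\langle\phi|\hat{M}_{jj'}|\phi\rangle|j\rangle_{\rm B}\langle j'|$, whose off-diagonal element at the chosen $(j,j')$ pair is nonzero, so Bob indeed holds a state with nonvanishing coherence in $\mathcal{E}_{\rm B}$.

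The principal subtlety, and the only place I expect real care to be needed, is being explicit about the flexibility of the incoherent Kraus class. Although $|0\rangle_{\rm A}\langle\phi|$ has $|\phi\rangle_{\rm A}$ as an in-general highly coherent vector, it nevertheless maps every incoherent state to an incoherent state and hence qualifies as an incoherent Kraus operator once one allows the disjoint partition $\{\mathcal{E}_{i}\}$ in Eq.~\eqref{GC:eq:incohKraus} to coincide with the full basis $\mathcal{E}_{\rm A}$. Recognizing this shows that Alice, through relabeling onto a fixed computational outcome, effectively has access to any rank-one POVM while still remaining inside the incoherent class, which is exactly what is needed to resolve the cross-basis coherence encoded in $\hat{M}_{jj'}$ into local coherence on Bob's side. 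Once this observation is in place, both directions reduce to short linear-algebraic verifications, so I anticipate no further obstruction.
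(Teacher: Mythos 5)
The paper does not supply its own proof of this theorem: it is imported from Streltsov \emph{et al.}~\cite{Streltsov2017}, and Appendix~\ref{App:7} proves only the downstream Corollary~\ref{CoherenceInduce}. Your reconstruction is correct and follows essentially the standard route of that reference: closure of the quantum-incoherent-on-B states under every branch of an LICC protocol for the ``only if'' direction, and an explicit incoherent rank-one measurement on Alice's side that steers Bob into a coherent state for the ``if'' direction; your observation that $|0\rangle_{\rm A}\langle\phi_k|$ is a legitimate general (though not strict) incoherent Kraus operator is exactly the crux that makes the protocol LICC rather than merely LOCC. Two small points deserve one extra line each. First, the claim that the outcome $k{=}0$ occurs with nonzero probability is not automatic from $\langle\phi|\hat{M}_{jj'}|\phi\rangle\neq 0$ alone; it follows because $\langle\phi|\hat{\varrho}_{\rm AB}|\phi\rangle_{\rm A}$ is a positive semidefinite operator on Bob's space, so a vanishing trace would force all of its matrix elements, including the off-diagonal one you selected, to vanish. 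Second, ``distil'' is an asymptotic notion: you should close the loop by noting (a) that the QI-B form is preserved under tensor powers and the attachment of local incoherent ancillas, so the invariance argument survives the many-copy setting, and (b) that once Bob holds, with a constant per-copy probability, a state with nonzero coherence, the equality of distillable coherence and the relative entropy of coherence under incoherent operations~\cite{Winter2016} gives him a strictly positive distillation rate. With those two remarks added, the argument is complete.
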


\begin{corol}\label{CoherenceInduce}
Given the local computational bases $\mathcal{E}_{\rm A}{=}\{|i\rangle_{\rm A}\}$ and $\mathcal{E}_{\rm B}{=}\{|j\rangle_{\rm B}\}$, Alice and Bob sharing a bipartite quantum state $\hat{\varrho}_{\rm AB}$, and $\mathfrak{C}_{\rm r}(\hat{\varrho}_{\rm A}){=}\mathfrak{C}_{\rm r}(\hat{\varrho}_{\rm B}){=}0$, they cannot distil quantum coherence on neither sides using LICC if and only if $\hat{\varrho}_{\rm AB}$ is not quantum correlated, i.e. $\mathfrak{C}_{\rm r}^{\rm net}(\hat{\varrho}_{\rm AB}){=}0$, with respect to the global bases $\mathcal{E}_{\rm AB}{=}\mathcal{E}_{\rm A}{\otimes}\mathcal{E}_{\rm B}$.
\end{corol}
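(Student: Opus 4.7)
My plan is to deduce this corollary from the preceding theorem (attributed to Streltsov) by combining it with the additivity of the relative entropy of coherence and the fact that $\mathfrak{C}_{\rm r}(\hat{\varrho}){=}0$ iff $\hat{\varrho}$ is diagonal in the reference basis. The key collapse, under the hypothesis that both marginals are incoherent, is that $\mathfrak{C}_{\rm r}^{\rm net}(\hat{\varrho}_{\rm AB}){=}0$ reduces to $\mathfrak{C}_{\rm r}(\hat{\varrho}_{\rm AB}){=}0$, i.e.\ $\hat{\varrho}_{\rm AB}$ is strictly classical-classical with respect to $\mathcal{E}_{\rm AB}{=}\mathcal{E}_{\rm A}{\otimes}\mathcal{E}_{\rm B}$, which is precisely the intersection of the two one-way nondiscordant classes appearing in the preceding theorem.

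For the ``if'' direction, I would start from $\mathfrak{C}_{\rm r}^{\rm net}(\hat{\varrho}_{\rm AB}){=}0$ and invoke the additivity identity $\mathfrak{C}_{\rm r}(\hat{\varrho}_{\rm AB}){=}\mathfrak{C}_{\rm r}^{\rm net}(\hat{\varrho}_{\rm AB}){+}\mathfrak{C}_{\rm r}(\hat{\varrho}_{\rm A}){+}\mathfrak{C}_{\rm r}(\hat{\varrho}_{\rm B})$ together with the vanishing of both marginal coherences to conclude $\mathfrak{C}_{\rm r}(\hat{\varrho}_{\rm AB}){=}0$. Faithfulness of $\mathfrak{C}_{\rm r}$ then gives $\hat{\varrho}_{\rm AB}{=}\sum_{ij}p_{ij}|i\rangle_{\rm A}\langle i|{\otimes}|j\rangle_{\rm B}\langle j|$, which trivially admits both the one-way A-quantum / B-classical decomposition and its swap. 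The preceding theorem applied to Bob, together with its A/B-swapped counterpart applied to Alice, then implies that LICC cannot distil coherence on either side.

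For the ``only if'' direction, I would assume that LICC cannot distil coherence on either side. The preceding theorem (in both symmetric orientations) then supplies two simultaneous representations of $\hat{\varrho}_{\rm AB}$, namely $\sum_j p_j\hat{\varrho}_{{\rm A};j}{\otimes}|j\rangle_{\rm B}\langle j|$ and $\sum_i q_i|i\rangle_{\rm A}\langle i|{\otimes}\hat{\varrho}_{{\rm B};i}$. Expanding $\hat{\varrho}_{\rm AB}{=}\sum_{ii'jj'}c_{ii'jj'}|i\rangle_{\rm A}\langle i'|{\otimes}|j\rangle_{\rm B}\langle j'|$ in the product basis $\mathcal{E}_{\rm AB}$, the first form enforces $c_{ii'jj'}{=}0$ whenever $j{\neq}j'$ and the second enforces $c_{ii'jj'}{=}0$ whenever $i{\neq}i'$. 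Hence $\hat{\varrho}_{\rm AB}$ is diagonal in $\mathcal{E}_{\rm AB}$, so $\mathfrak{C}_{\rm r}(\hat{\varrho}_{\rm AB}){=}0$, and the additivity identity combined with the incoherent-marginals hypothesis delivers $\mathfrak{C}_{\rm r}^{\rm net}(\hat{\varrho}_{\rm AB}){=}0$.

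The only mild obstacle is the matrix-element argument in the ``only if'' direction that intersects the two one-way nondiscordant classes down to the strictly classical-classical class; once this observation is in place, both directions reduce to a direct invocation of the preceding theorem and the additivity of $\mathfrak{C}_{\rm r}$. I would also stress that the incoherent-marginals hypothesis is essential: without it, $\mathfrak{C}_{\rm r}^{\rm net}{=}0$ is satisfied by every product state (including those whose factors carry local coherence), in which case Alice or Bob could already distil coherence from their local marginal alone and the stated equivalence would fail.
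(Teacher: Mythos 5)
Your proposal is correct and follows essentially the same route as the paper's own proof: use the incoherent-marginals hypothesis to collapse $\mathfrak{C}_{\rm r}^{\rm net}(\hat{\varrho}_{\rm AB}){=}0$ to $\mathfrak{C}_{\rm r}(\hat{\varrho}_{\rm AB}){=}0$, invoke the preceding (Streltsov) theorem in both A/B orientations, and observe that the simultaneous validity of the two one-way nondiscordant forms is equivalent to the globally incoherent classical-classical form. Your explicit matrix-element argument for that last intersection, and your remark on why the incoherent-marginals hypothesis cannot be dropped, merely fill in details the paper leaves implicit.
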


\noindent The above corollary gives us a way to see quantum correlations from a different perspective.
Suppose that a bipartite state is shared between two classical agents Alice and Bob.
If we ask under what conditions one of the parties can provide quantum coherence (and hence, computational power) for the other party using local classical operations and classical communication, i.e., LICC, then the answer is given by Corollary~\ref{CoherenceInduce}:
if $\mathfrak{C}_{\rm r}^{\rm net}(\hat{\varrho}_{\rm AB}){\neq}0$ with respect to the global bases $\mathcal{E}_{\rm AB}{=}\mathcal{E}_{\rm A}{\otimes}\mathcal{E}_{\rm B}$.
This is simply the operational interpretation of our extended notion of quantum correlations.


\section*{Discussion}

We have built a rigorous framework for benchmarking the efficiency of computational models as a separation criterion of classical and quantum resources.
We then introduced the NDQC2 model of quantum computation, a collaborative nonlocal algorithm for estimating the product of the normalized traces of two unitary matrices that shows an exponential speedup compared to the best known classical algorithms.
We demonstrated that this task can be done using a separable and nondiscordant input state. 
In essence, based on our quantum-classical separation criterion we argued that the exponential speedup of NDQC2 over classical algorithms is a manifestation of the quantum correlations beyond entanglement and discord inherent within the input quantum state to our toy model.

Our main observation, however, was that the standard classification of quantum correlations in quantum information theory does not capture the quantumness of such correlations, and thus require a revision.
It is noteworthy that, similar arguments exists within the quantum optics community, where nonclassical phase-space quasiprobability distributions are the signatures of quantumness~\cite{VogelBook,Ferraro2012}.
This viewpoint has also been extended to composite discrete-continuous variables systems~\cite{Sperling2017}.
Very recently, we have shown that phase-space nonclassicality provides a resource for nonlocal \textsc{BosonSampling} in absence of the standard quantum correlations of quantum information~\cite{Shahandeh2017NLBS}, in favour of the quantum optical viewpoint.

The approach we presented here, extends the standard quantum information theoretic classification of quantum correlations to include quantum advantages obtained in distributed quantum computation protocols.
We quantitatively showed that the net global coherence emerging from correlations between subsystems can be considered as equivalent to quantum correlations.
We showed that our generalized definition of quantum correlations characterizes the necessary and sufficient quantum resources in NDQC2 and properly contains the standard classification as a particular case.

It is worth noting that within the rapidly developing field of quantum coherence, there has recently been interests in establishing an appropriate framework for the paradigm of local incoherent operations and classical communication (LICC), in which parties are restricted to locally incoherent operations~\cite{Streltsov2017,Chitambar2016}.
One can think of LICC as the class of classical operations on distant multipartite systems.
Quantum correlations, as we have defined, then are the weakest in the sense that they allow a party to remotely provide quantum computational resources for a distant party using LICC.
Thus, we see that it is possible to obtain local resource states for efficient quantum computation, even if no local resource states are initially available and the parties are locally restricted to classical operations which do not generate resource states, if and only if they share quantum correlations of the type introduced here.

The relation between multipartite quantum coherence and quantum correlations has also been studied recently by other researchers~\cite{Killoran2016,Ma2016,Streltsov2015,Tan2016}.
However, their approaches are fundamentally different from the perspective presented in this manuscript.
Specifically, rather than to investigate the conversion of \emph{local} quantum coherence into standard types of quantum correlations, we have considered \emph{net global} quantum coherence as a primitive notion of quantum correlation, which can be distributed and might reveal its unique quantum signatures only within distributed quantum protocols.
As a consequence, our results opens up the possibility for exploring new protocols that use such correlations, which are generically cheaper than entanglement and discord, to perform collaborative tasks more efficiently than any classical algorithm.


\section*{Acknowledgements}The authors gratefully acknowledge Werner Vogel, Fabio Costa, and Eric Chitambar for valuable discussions.
This project was supported by the Australian Research Council Centre of Excellence for Quantum Computation and Communication Technology (CE110001027).

\newpage
\appendix
\begin{widetext}

\section{Proof of Theorem~\ref{GC:th:QConNecQComp}}\label{App:2}

\noindent\textbf{Theorem~\ref{GC:th:QConNecQComp}.}
Production or consumption of quantum coherence provides the necessary resource for the exponential speed~up of quantum computations versus classical ones.

\begin{proof}
First, notice that a quantum computer runs algorithms in a polynomial time (number of steps or gates), otherwise it would not be efficient.
Then, the contrary of the above statement would be that,
\begin{itemize}
\item  there exist a quantum computer in which all steps are strict incoherent operations, i.e., neither they produce nor they consume quantum coherence.
\end{itemize}
We now use the fact that one is able to produce any incoherent state $\hat{\sigma}\in\set\s{inc}$ at the output of a quantum computer, which is equivalent to the universality of the computer's gates over the set of incoherent states as given by Eq.~\eqref{GC:eq:IncEffDec}.
Hence, for one such a computer, the set of gates used is, in fact, a USI set of operations.
As a result, our counter assumption further translates into that, 
\begin{itemize}
\item there exists a quantum computer which merely performs a polynomial number of USI operations.
\end{itemize}
Using the isomorphism between classical computers and the formalism of coherence theory with a set of USI operations given in Eq.~\eqref{GC:eq:CompCohIso}, the latter just means that whatever such a quantum computer does can be equally performed on a classical computer.
This contradicts the assumption that we have a quantum computer running algorithms faster than any classical computer, hence the result.
\end{proof}


\section{Proof of Lemma~\ref{NDQC2Prec}: Estimate Precision in NDQC2}\label{App:1}

Following Ref.~\cite{Matera2016}, it is easy to show that the precision in the estimate of the quantity $\iota {=}{\rm Tr}\hat{U}_{\rm A}{\cdot}{\rm Tr}\hat{U}_{\rm B}/2^{n_{\rm A}+n_{\rm B}}$ is given by the relative entropy of coherence of the control qubit $\mathfrak{C}_{\rm r}(\hat{\varrho}_{{\rm AB}}^{\rm cont})$.
The latter is defined as the optimal rate at which maximally coherent states can be distilled from infinitely many copies of the state $\hat{\varrho}_{{\rm AB}}^{\rm cont}$ using incoherent operations~\cite{Winter2016}.
Hence, given $M$ copies of the state $\hat{\varrho}_{{\rm AB}}^{\rm cont}$ a total number $N{\approx}M\mathfrak{C}_{\rm r}(\hat{\varrho}_{{\rm AB}}^{\rm cont})$ pairs of maximally coherent states can be distilled in Alice's and Bob's local laboratories.
Assuming $M$ being large, as shown in~~\cite{Matera2016}, the standard error in the estimation of each local normalized trace $\iota_{\rm X} {=}{\rm Tr}\hat{U}_{\rm X}/2^{n_{\rm X}}$ is given by $\textsc{SE}(\iota_{\rm X})\approx\sqrt{\frac{2-|\iota_{\rm X}|^2}{N}}=\sqrt{\frac{2-|\iota_{\rm X}|^2}{M\mathfrak{C}_{\rm r}(\hat{\varrho}_{{\rm AB}}^{\rm cont})}}$ for $\rm X= A,B$.
The standard error of the quantity $\iota$ is thus given by
\begin{equation}\label{SE}
\textsc{SE}(\iota)=\sqrt{\textsc{SE}(\iota_{\rm A})^2+\textsc{SE}(\iota_{\rm B})^2}\approx\sqrt{\frac{4-|\iota_{\rm A}|^2-|\iota_{\rm B}|^2}{M\mathfrak{C}_{\rm r}(\hat{\varrho}_{{\rm AB}}^{\rm cont})}}.
\end{equation}

The binary precision in the estimation of a number $\mu$ with $|\mu|{\leqslant} 1$ goes like ${\rm BP}(\mu){\approx} -\log_2[{\rm SE}(\mu)]$.
Noting that the nominator in Eq.~\eqref{SE} is bounded by $2\leqslant 4 -|\iota_{\rm A}|^2-|\iota_{\rm B}|^2 \leqslant 4$, we find
\begin{equation}
{\rm BP}(\iota)\approx \frac{1}{2} \log_2 \mathfrak{C}_{\rm r}(\hat{\varrho}_{{\rm AB}}^{\rm cont}).
\end{equation}

\section{Derivation of Eq.~\eqref{Cnet2} of the main text} \label{DEQ4}

To show the truth of the Eq.~\eqref{Cnet2} of the main text, again, we recall from Eq.~\eqref{Cnet1} of the main text that
\begin{equation}\label{Crnet}
\mathfrak{C}_{\rm r}^{\rm net}(\hat{\varrho}_{\rm AB}) = \mathfrak{C}_{\rm r} (\hat{\varrho}_{\rm AB}) - \mathfrak{C}_{\rm r}(\hat{\varrho}_{\rm A}) - \mathfrak{C}_{\rm r}(\hat{\varrho}_{\rm B}),
\end{equation}
in which $ \mathfrak{C}_{\rm r}$ refers to the REC.
We also restate that $\mathfrak{C}_{\rm r}(\hat{\varrho}){:=}\mathtt{S}(\Delta[\hat{\varrho}]){-}\mathtt{S}(\hat{\varrho})$, in which $\mathtt{S}(\hat{\varrho}){=}{-}{\rm Tr}\hat{\varrho}\log{\hat{\varrho}}$ and $\Delta[\hat{\varrho}]{:=}\sum \langle i|\hat{\varrho}|i\rangle |i\rangle\langle i|$ are the von Neumann entropy and the fully dephasing channel with respect to the computational bases $\mathcal{E}{=}\{|i\rangle\}$.
Consequently, we have
\begin{equation}\label{Crs}
\begin{split}
&\mathfrak{C}_{\rm r}(\hat{\varrho}_{\rm AB}){=}\mathtt{S}(\Delta_{\rm AB}[\hat{\varrho}_{\rm AB}]){-}\mathtt{S}(\hat{\varrho}_{\rm AB}),\\
&\mathfrak{C}_{\rm r}(\hat{\varrho}_{\rm A}){=}\mathtt{S}(\Delta_{\rm A}[\hat{\varrho}_{\rm A}]){-}\mathtt{S}(\hat{\varrho}_{\rm A}),\\
&\mathfrak{C}_{\rm r}(\hat{\varrho}_{\rm B}){=}\mathtt{S}(\Delta_{\rm B}[\hat{\varrho}_{\rm B}]){-}\mathtt{S}(\hat{\varrho}_{\rm B}).
\end{split}
\end{equation}
Substituting the relations of Eq.~\eqref{Crs} into Eq.~\eqref{Crnet}, after a simple rearrangement and using $\mathtt{I}(\hat{\varrho}_{\rm AB}){:=}\mathtt{S}(\hat{\varrho}_{\rm A}){+}\mathtt{S}(\hat{\varrho}_{\rm B}){-}\mathtt{S}(\hat{\varrho}_{\rm AB})$, we get
\begin{equation}\label{CrIDI}
\begin{split}
\mathfrak{C}_{\rm r}^{\rm net}(\hat{\varrho}_{\rm AB}) &= \left\{ \mathtt{S}(\hat{\varrho}_{\rm A}) + \mathtt{S}(\hat{\varrho}_{\rm B}) - \mathtt{S}(\hat{\varrho}_{\rm AB}) \right\} - \left\{ \mathtt{S}(\Delta_{\rm A}[\hat{\varrho}_{\rm A}]) + \mathtt{S}(\Delta_{\rm B}[\hat{\varrho}_{\rm B}]) - \mathtt{S}(\Delta_{\rm AB}[\hat{\varrho}_{\rm AB}]) \right\},\\
&=\mathtt{I}(\hat{\varrho}_{\rm AB})-\mathtt{I}(\Delta_{\rm AB}[\hat{\varrho}_{\rm AB}]).
\end{split}
\end{equation}
We note that, for the second equality to be true, we need that ${\rm Tr}_{\rm B}\Delta_{\rm AB}[\hat{\varrho}_{\rm AB}] = \Delta_{\rm A}[{\rm Tr}_{\rm B}\hat{\varrho}_{\rm AB}] = \Delta_{\rm A}[\hat{\varrho}_{\rm A}]$, and similarly ${\rm Tr}_{\rm A}\Delta_{\rm AB}[\hat{\varrho}_{\rm AB}] = \Delta_{\rm B}[{\rm Tr}_{\rm A}\hat{\varrho}_{\rm AB}] = \Delta_{\rm B}[\hat{\varrho}_{\rm B}]$.
This can be easily verified, as
\begin{equation}
\begin{split}
{\rm Tr}_{\rm B}\Delta_{\rm AB}[\hat{\varrho}_{\rm AB}] &= \sum_{ij}\langle i,j|\hat{\varrho}_{\rm AB}|i,j\rangle |i\rangle\langle i|\left({\rm Tr}|j\rangle\langle j|\right) \\
&= \sum_i |i\rangle\langle i|\otimes \langle i|\left(\sum_j \langle j|\hat{\varrho}_{\rm AB}|j\rangle\right)|i\rangle \\
&= \sum_i |i\rangle\langle i|\otimes \langle i|\left({\rm Tr}_{\rm B}\hat{\varrho}_{\rm AB}\right)|i\rangle \\
&=\Delta_{\rm A}[\hat{\varrho}_{\rm A}].
\end{split}
\end{equation}

\qed

\section{Proof of Theorem~\ref{th:positNetCoh}}\label{P1}

\noindent\textbf{Theorem~\ref{th:positNetCoh}.}
For every bipartite quantum state $\hat{\varrho}_{\rm AB}$ it holds that $\mathfrak{C}^{\rm net}_{\rm r}(\hat{\varrho}_{\rm AB})\geqslant 0$.
The equality holds if and only if the quantum state is a product state or has the form $\hat{\varrho}_{AB}=\sum_{ij} p_{ij} \ketbrax{i}{A}\otimes\ketbrax{j}{B}$ with respect to the global computational basis $\mathcal{E}_{\rm AB }{=}\mathcal{E}_{\rm A}{\otimes}\mathcal{E}_{\rm B}{=}\{|i\rangle_{\rm A}\otimes|j\rangle_{\rm B}\}$, with $\{p_{ij}\}$ being a probability distribution.

\paragraph*{Proof.}
Recall that the basis-dependent discord~\cite{Yadin2016} is defined as 
\begin{equation}\label{DAB}
\mathtt{D}_{{\rm A}\to{\rm B}}(\hat{\varrho}_{\rm AB}):=\mathtt{I}(\hat{\varrho}_{\rm AB}) - \mathtt{I}(\Delta_{\rm A}[\hat{\varrho}_{\rm AB}]),
\end{equation}
in which $\Delta_{\rm A}[\hat{\varrho}_{\rm AB}]=\sum_i |i\rangle\langle i|\otimes\langle i|\hat{\varrho}_{\rm AB}|i\rangle = \sum_i p_i |i\rangle\langle i|\otimes \hat{\varrho}_{{\rm B};i}$, with $p_i={\rm Tr}_{\rm B}\langle i|\hat{\varrho}_{\rm AB}|i\rangle$, is the one-sided dephasing channel for Alice's subsystem. 
It has been shown that $\mathtt{D}_{{\rm A}\to{\rm B}}(\hat{\varrho}_{\rm AB})\geqslant 0$~\cite{Ollivier2001}.
Similarly,
\begin{equation}\label{Rev}
\mathtt{D}_{{\rm B}\to{\rm A}}(\hat{\varrho}_{\rm AB})\geqslant 0 \quad\Leftrightarrow\quad \mathtt{I}(\hat{\varrho}_{\rm AB}) \geqslant \mathtt{I}(\Delta_{\rm B}[\hat{\varrho}_{\rm AB}]).
\end{equation}
We can easily verify that
\begin{equation}
\begin{split}
\Delta_{\rm AB}[\hat{\varrho}_{\rm AB}] &= \sum_{ij}\langle i,j|\hat{\varrho}_{\rm AB}|i,j\rangle |i\rangle\langle i|\otimes|j\rangle\langle j| \\
&= \sum_i |i\rangle\langle i|\otimes \langle i|\left(\sum_j |j\rangle\langle j|\otimes\langle j|\hat{\varrho}_{\rm AB}|j\rangle\right)|i\rangle \\
&=\Delta_{\rm A}[\Delta_{\rm B}[\hat{\varrho}_{\rm AB}]] = \Delta_{\rm B}[\Delta_{\rm A}[\hat{\varrho}_{\rm AB}]] .
\end{split}
\end{equation}
Combining this with Eq.~\eqref{DAB} and \eqref{Rev}, we have
\begin{equation}\label{CPos}
\begin{split}
\mathtt{D}_{{\rm A}\to{\rm B}}(\Delta_{\rm B}[\hat{\varrho}_{\rm AB}]) &=\mathtt{I}(\Delta_{\rm B}[\hat{\varrho}_{\rm AB}]) - \mathtt{I}(\Delta_{\rm A}[\Delta_{\rm B}[\hat{\varrho}_{\rm AB}]])\\
&= \mathtt{I}(\Delta_{\rm B}[\hat{\varrho}_{\rm AB}]) - \mathtt{I}(\Delta_{\rm AB}[\hat{\varrho}_{\rm AB}])\geqslant 0\\
& \Leftrightarrow \mathtt{I}(\Delta_{\rm B}[\hat{\varrho}_{\rm AB}]) \geqslant \mathtt{I}(\Delta_{\rm AB}[\hat{\varrho}_{\rm AB}])\\
& \Rightarrow \mathtt{I}(\hat{\varrho}_{\rm AB}) \geqslant \mathtt{I}(\Delta_{\rm B}[\hat{\varrho}_{\rm AB}]) \geqslant \mathtt{I}(\Delta_{\rm AB}[\hat{\varrho}_{\rm AB}])\\
& \Rightarrow \mathfrak{C}_{\rm r}^{\rm net}(\hat{\varrho}_{\rm AB})=\mathtt{I}(\hat{\varrho}_{\rm AB})-\mathtt{I}(\Delta_{\rm AB}[\hat{\varrho}_{\rm AB}]) \geqslant 0.
\end{split}
\end{equation}

For the equality to hold, either both $\mathtt{I}(\hat{\varrho}_{\rm AB})$ and $\mathtt{I}(\Delta_{\rm AB}[\hat{\varrho}_{\rm AB}])$ must be zero, which implies that the state is a product.
Or, one must have $\mathtt{D}_{{\rm A}\to{\rm B}}(\Delta_{\rm B}[\hat{\varrho}_{\rm AB}]) = 0$ in Eq.~\eqref{CPos}, which implies the form of $\Delta_{\rm B}[\hat{\varrho}_{\rm AB}]$ to be $\Delta_{\rm B}[\hat{\varrho}_{\rm AB}]= \sum_{ij} p_{ij} \ketbrax{i}{A}\otimes\ketbrax{j}{B}$.
This in turn means that $\hat{\varrho}_{AB}=\sum_{ij} p_{ij} \ketbrax{i}{A}\otimes\hat{\varrho}_{{\rm B};j}$. 
Also, from symmetry of A and B, and by similar arguments, it must be true that $\hat{\varrho}_{AB}=\sum_{ij} p_{ij} \hat{\varrho}_{{\rm A};i}\otimes\ketbrax{j}{B}$.
Together, we must have $\hat{\varrho}_{AB}=\sum_{ij} p_{ij} \ketbrax{i}{A}\otimes\ketbrax{j}{B}$.

\qed

\section{Proof of Theorem~\ref{th:NetCohPure}}\label{App:5}
\noindent\textbf{Theorem~\ref{th:NetCohPure}.}
Any multipartite pure state has a nonzero net global coherence if and only if it is entangled.

\paragraph*{Proof.}
Using Theorem~\ref{th:positNetCoh} it is clear that any entangled state has a nonzero net global coherence.
The converse easily follows from the fact that any pure multipartite state is either a product state or entangled so that the net global coherence of nonentangled states becomes zero by the additivity condition for the coherence measures.

\qed

\section{Proof of Theorem~\ref{th:nonDiscord}}\label{P3}

\noindent\textbf{Theorem~\ref{th:nonDiscord}.}
A bipartite quantum state $\hat{\varrho}_{\rm AB}$ is a CC state if and only if $\mathfrak{C}_{\rm r}^{\rm net}(\hat{\varrho}_{\rm AB}){=}0$ within some appropriate global computational basis $\mathcal{E}^\star_{\rm AB}$.

To prove Theorem~\ref{th:nonDiscord}, let us first prove the following useful Lemma.

\noindent\textbf{Lemma.} A bipartite quantum state $\hat{\varrho}_{\rm AB}$ is a CC state if and only if it is incoherent within some appropriate global computational basis $\mathcal{E}^\star_{\rm AB}$.

\paragraph*{Proof.} 

If: Assuming that there exists a computational basis $\mathcal{E}^\star_{\rm AB}{=}\{|i\rangle_{\rm A}{\otimes}|j\rangle_{\rm B}\}$ in which the state $\hat{\varrho}_{\rm AB}$ is incoherent implies that the state can be written as $\hat{\varrho}_{\rm AB}{=}\sum_{ij} p_{ij} |i\rangle_{\rm A}\langle i|{\otimes}|j\rangle_{\rm B}\langle j|$. 
Due to the orthogonality of the local vectors in the basis set, $\hat{\varrho}_{\rm AB}$ is CC.
Only if: A CC state, by definition, admits the form $\hat{\varrho}_{\rm AB}{=}\sum_{ij} p_{ij} |i\rangle_{\rm A}\langle i|{\otimes}|j\rangle_{\rm B}\langle j|$, which is clearly incoherent with the choice of computational basis $\mathcal{E}^\star_{\rm AB}{=}\{|i\rangle_{\rm A}{\otimes}|j\rangle_{\rm B}\}$.

\qed

Now we can prove Theorem~\ref{th:nonDiscord}.

\paragraph*{Proof.} 

Using Lemma~1 above, it is sufficient to prove that a bipartite quantum state $\hat{\varrho}_{\rm AB}$ is incoherent within some  appropriate global computational basis $\mathcal{E}^\star_{\rm AB}$ if and only if $\mathfrak{C}_{\rm r}^{\rm net}(\hat{\varrho}_{\rm AB}){=}0$.

Only if: A CC state, by definition, admits the form $\hat{\varrho}_{\rm AB}{=}\sum_{ij} p_{ij} |i\rangle_{\rm A}\langle i|{\otimes}|j\rangle_{\rm B}\langle j|$, which is clearly incoherent with the choice of the computational basis $\mathcal{E}^\star_{\rm AB}{=}\{|i\rangle_{\rm A}{\otimes}|j\rangle_{\rm B}\}$.
Also, global incoherence implies marginal incoherence, evidently.
Consequently, one has $\mathfrak{C}_{\rm r}(\hat{\varrho}_{\rm AB}){=}\mathfrak{C}_{\rm r}(\hat{\varrho}_{\rm A}){=}\mathfrak{C}_{\rm r}(\hat{\varrho}_{\rm B}){=}0$, which results $\mathfrak{C}_{\rm r}^{\rm net}(\hat{\varrho}_{\rm AB}){=}0$.

If: Assuming that $\mathfrak{C}_{\rm r}^{\rm net}(\hat{\varrho}_{\rm AB}){=}0$, from Eq.~\eqref{CrIDI} we have $\mathtt{I}(\hat{\varrho}_{\rm AB}) = \mathtt{I}(\Delta_{\rm AB}[\hat{\varrho}_{\rm AB}])$.
Combining this with the facts $\mathtt{I}(\hat{\varrho}_{\rm AB}) \geqslant \mathtt{I}(\Delta_{\rm A}[\hat{\varrho}_{\rm AB}]) \geqslant \mathtt{I}(\Delta_{\rm AB}[\hat{\varrho}_{\rm AB}])$ and $\mathtt{I}(\hat{\varrho}_{\rm AB}) \geqslant \mathtt{I}(\Delta_{\rm B}[\hat{\varrho}_{\rm AB}]) \geqslant \mathtt{I}(\Delta_{\rm AB}[\hat{\varrho}_{\rm AB}])$ (see Eq.~\eqref{CPos}), we conclude that $\mathtt{I}(\hat{\varrho}_{\rm AB}) = \mathtt{I}(\Delta_{\rm B}[\hat{\varrho}_{\rm AB}])$ and $\mathtt{I}(\hat{\varrho}_{\rm AB}) = \mathtt{I}(\Delta_{\rm B}[\hat{\varrho}_{\rm AB}])$.
On one hand, using Eq.~\eqref{DAB}, these equalities imply that the state must possesses a vanishing basis-dependent discord~\cite{Yadin2016} both from Alice to Bob and from Bob to Alice.
On the other hand, the standard discord is the minimum of the basis-dependent discord over all measurements on either sides~\cite{Yadin2016}.
As a result, the state must have a vanishing (standard) discord both from Alice to Bob and from Bob to Alice.
It is already known that this is possible only if the state is CC.

\qed

\section{Proof of Corollary~\ref{CoherenceInduce}}\label{App:7}

\noindent\textbf{Corollary~\ref{CoherenceInduce}.}
Given the local computational bases $\mathcal{E}_{\rm A}{=}\{|i\rangle_{\rm A}\}$ and $\mathcal{E}_{\rm B}{=}\{|j\rangle_{\rm B}\}$, Alice and Bob sharing a bipartite quantum state $\hat{\varrho}_{\rm AB}$, and $\mathfrak{C}_{\rm r}(\hat{\varrho}_{\rm A}){=}\mathfrak{C}_{\rm r}(\hat{\varrho}_{\rm B}){=}0$, they cannot distil quantum coherence on neither sides using LICC if and only if $\hat{\varrho}_{\rm AB}$ is not quantum correlated, i.e. $\mathfrak{C}_{\rm r}^{\rm net}(\hat{\varrho}_{\rm AB}){=}0$, with respect to the global bases $\mathcal{E}_{\rm AB}{=}\mathcal{E}_{\rm A}{\otimes}\mathcal{E}_{\rm B}$.

\paragraph*{Proof.}
The condition $\mathfrak{C}_{\rm r}(\hat{\varrho}_{\rm A}){=}\mathfrak{C}_{\rm r}(\hat{\varrho}_{\rm B}){=}0$ reduces $\mathfrak{C}_{\rm r}^{\rm net}(\hat{\varrho}_{\rm AB}){=}0$ to $\mathfrak{C}_{\rm r}(\hat{\varrho}_{\rm AB}){=}0$.
This reads as Alice and Bob cannot distil quantum coherence on neither sides using LICC if and only if $\hat{\varrho}_{\rm AB}$ is not globally coherent.
Using Theorem~4, this means that the state must be of the forms $\hat{\varrho}_{\rm AB}{=}\sum_{i} p_{i} |i\rangle_{\rm A}\langle i|{\otimes}\hat{\varrho}_{{\rm B};i}$ and $\hat{\varrho}_{\rm AB}{=}\sum_{j} p_{j} \hat{\varrho}_{{\rm A};j}{\otimes}|j\rangle_{\rm B}\langle j|$, simultaneously.
Equivalently, it must be of the form $\hat{\varrho}_{\rm AB}{=}\sum_{ij} p_{ij} |i\rangle_{\rm A}\langle i|{\otimes}|j\rangle_{\rm B}\langle j|$, that is, globally incoherent.
\qed

\end{widetext}

\bibliography{GCandQC.bib}
\bibliographystyle{apsrev4-1new.bst}

\end{document}